\documentclass[12pt,a4paper,twoside,notitlepage]{article}
\usepackage[english]{babel}
\usepackage[T1]{fontenc} 
\usepackage{graphicx}
\usepackage{amsmath,amsthm,epsfig,amsfonts,bbm} 



\setlength{\textheight}{23cm} 
\setlength{\textwidth}{16cm} %
\setlength{\oddsidemargin}{0.0cm} %
\setlength{\parindent}{0.5cm}
\setlength{\hoffset}{0.cm}
\setlength{\voffset}{0.cm}
\setlength{\oddsidemargin}{0cm}
\setlength{\evensidemargin}{0cm}
\setlength{\topmargin}{-1cm}
\setlength{\footskip}{1cm}
\setlength{\headheight}{1.5cm}
\setlength{\headsep}{1cm}
\setlength{\unitlength}{1 mm}

\DeclareMathOperator{\sgn}{sgn}

\newcommand{\Frac}[2] {\frac{\mbox{\normalsize{$#1$}}}{\mbox{\normalsize{$#2$}}}}
\newcommand{\DP}[2]{\displaystyle \frac{\partial #1}{\partial #2}}


\newcommand{\pequationdeb}{$$ \left\{ \begin{minipage}[c]{130mm}}
\newcommand{\pequationfin}{\end{minipage}
                           \right. $$}
\newcommand{\vite}{\mathbf{u}}
\newcommand{\srb}{\sqrt{R}}
\newcommand{\Rb}{R}
\newcommand{\beq}     {\begin{equation}}
\newcommand{\enq}     {\end{equation}}
\newcommand{\be}    {\begin{enumerate}}
\newcommand{\ee}    {\end{enumerate}}

\newcommand{\Bb}

\newcommand{\RE}{\mbox{{\rm Re}}}
\newcommand{\Bo}{\mbox{{\rm Bo}}}
\newcommand{\sqb}{\sqrt{\varepsilon}}


\newcommand{\Int}     {\displaystyle \int}


\newtheorem{theorem}{Theorem}
\newtheorem{lemma}[theorem]{Lemma}
\newtheorem{remark}[theorem]{{\em Remark}}
\newtheorem{proposition}[theorem]{Proposition}

\begin{document}
\thispagestyle{empty}
\setcounter{page}{1}

\title{Derivation of a viscous KP equation including surface tension, and related equations}

\author{Herv\'e V.J. Le Meur$^{1}$\thanks{The author wants to thank Luc Molinet and Jean-Claude Saut for fruitful discussions.}\\
$^1$ Laboratoire de Math\'ematiques d'Orsay, Univ. Paris-Sud, \\
CNRS, Universit\'e Paris-Saclay, 91405 Orsay, France. \\
\texttt{Herve.LeMeur@math.u-psud.fr}}

\date{October 2015}


\maketitle

\begin{abstract} The aim of this article is to derive surface wave
  models in the presence of surface tension and viscosity. Using the
  Navier-Stokes equations with a free surface, flat bottom and surface
  tension, we derive the viscous 2D Boussinesq system with a weak
  transverse variation. The assumed transverse variation is on a
  larger scale than along the main propagation direction. This
  Boussinesq system is proved to be consistent with the Navier-Stokes
  equations. This system is only an intermediate result that enables
  us to derive the Kadomtsev-Petviashvili (KP) equation which is a 2D
  generalization of the KdV equation. In addition, we get the 1D KdV
  equation, and lastly the Boussinesq equation. All these equations
  are derived for non-vanishing initial conditions.
\end{abstract}


\noindent \underline{Subject Class:} 74J15, 35Q30, 76M45, 35Q35\\
\underline{Keywords:} water waves, shallow water, Boussinesq system, viscosity, KdV equation, surface tension, KP\\

\section{Introduction}

\subsection{Motivation}

Understanding the evolution of water waves is a longstanding
problem. For instance, Stokes already discussed a precise wave at
least 130 years ago \cite{Stokes_1880} and his name has been given to
this wave. Such a wave was still discussed more than one century later
\cite{Hasimoto_Ono_72} !

One of the striking phenomena is that there exist surface waves that
travel almost without modification. Such a behavior is an impetus to
investigate the dynamics of these waves.

The first motivation is to understand how they may either
disperse or dissipate and yet not vanish. It was discovered that there
is a regime in which nonlinearity compensates dispersion (with no
dissipation).

Another motivation is more technical. Should a wave keep its shape for
a long time without vanishing, it could be used in telecommunication
without requiring repeaters. Such a new technology could drastically
reduce the cost of telecommunications.

Beyond mathematical and industrial considerations, mathematicians and
physicists model the water waves with a fluid flow in a channel of
(often) flat bottom, without meniscus (appart in
\cite{Mei_Liu_73}). Since the Euler or Navier-Stokes equations are
rather prone to faithfully model such a fluid, one could be satisfied
with either of these models. Yet, the Direct Numerical Simulation is
too expensive and asymptotic models are required. In such models, one
assumes a regime of dimensionless parameters' smallness and makes
expansions of the equations to derive a simplified model with less
fields and less dimensions.

\subsection{Literature}

The techniques are very different depending on whether one chooses
Euler or Navier-Stokes as a model for the fluid.

On the one hand, for an inviscid flow (Euler), one assumes very often
irrotationality. Then one may use a potential function and a Dirichlet to
Neumann operator in the Zakharov-Craig-Sulem formulation. Numerous
articles use this formulation and a review of the mathematical proofs
has been published \cite{Lannes_2013}. Some authors use a
velocity-pressure formulation, but they still assume the
irrotationnality assumption \cite{Iguchi_06}. The derivation of
Kadomtsev-Petviashvili (KP) assumes a predominant propagation
direction along $x$ and a weakly transverse propagation along $y$. It is
initially done for an inviscid fluid in
\cite{Kadomtsev–Petviashvili_70} and the justification of this
approximation is done in \cite{Lannes_2013} (subsection 7.2). In
\cite{Lannes_02}, D. Lannes proves that the sum of one wave
propagating to the right and one to the left, both obeying a KP
equation converges (when $\varepsilon$ tends to zero) on
$[0,T_0/\varepsilon]$ to a function consistent with the Boussinesq
system, meaning that it is a solution, should it be only up to
$O(\varepsilon^2)$.

Various difficulties must be considered with these surface wave
models. Firstly two scales of the transverse velocity are choosen in
the literature. Either one assumes the transverse velocity to be
$O(1)$ or $O(\sqrt{\varepsilon})$. This problem was raised in Remark 3
of \cite{Lannes_Saut_06} and here in Remark
\ref{rem_Lannes_Johnson}. It leads to different Boussinesq systems.

In \cite{Ming_Zhang_Zhang_12_a}, the authors prove that a
dimensionless water wave system in an infinite strip under the
influence of gravity and surface tension has a unique solution on
$[0,T/\varepsilon]$. More precisely, if the initial solution is
sufficiently regular and small ($O(\sqrt{\varepsilon})$), there exists
a solution, on this time interval $[0,T/\varepsilon]$, that will
remain small ($O(\sqrt{\varepsilon})$). In
\cite{Ming_Zhang_Zhang_12_b}, the same authors prove that on the same
time interval, these solutions can be accurately approximated by sums
of solutions of two decoupled KP equations. It enhances the results of
\cite{Lannes_Saut_06} by taking surface tension and a variable bottom
into consideration.

The Boussinesq {\em equation} is a generalization of the KdV equation
for waves moving both to the right and to the left. From the
Boussinesq system, instead of changing to a frame moving to the right
as is done for KdV, we remain in the same coordinate system and get
higher order equation (perturbed wave equation), as is done in
\cite{Johnson_97} (p. 216-219).\\

On the other hand, when one assumes the fluid to be viscous, one may
not use the potential function. Moreover the number of boundary
conditions changes then. Viscosity has been considered in water waves
since 1895 by Boussinesq \cite{Boussinesq_1895_1}. The dynamic of viscous
water waves on finite depth was investigated more recently by Kakutani
and Matsuuchi \cite{KM_75}. They derived the viscous KdV equation from
vanishing initial conditions but made the error of using a Fourier
transform in time while the problem is of Cauchy type.

Liu and Orfila \cite{Liu_Orfila_04}, and the coauthors of P.L.-F. Liu
in subsequent articles, investigated the viscous Boussinessq system
with vanishing initial condition, and validated their KdV equation by
some experiments. For instance they experimentally proved the reverse
flow in the boundary layer, which is predicted by theory.

Although it was not done in Sobolev spaces, the derivation
of the viscous Boussinesq system (1D and isotropic 2D) and 1D KdV
equation with a non-vanishing initial condition was done in
\cite{LeMeur_14} without surface tension. It was partially the goal of
\cite{LeMeur_14} not to rely on the irrotationnality assumption so as
to derive the Boussinesq system and KdV equation for a viscous
fluid. This could be achieved thanks to the fact that one of the
equations contains $u_z=O(\varepsilon)$ which is the zeroth order of
the irrotationnality assumption. In \cite{LeMeur_14}, a more detailed
bibliography was given and the reason why all the preceding articles
did not derive the same KdV equation is explained.

In this article, we intend to cross the results for a viscous fluid
with some results concerning inviscid fluids (KP, Boussinesq
equation, surface tension).\\

The outline of this article is as follows. In section \ref{sec.2}, we
present the equations and state the viscous Boussinesq system in 2D
with surface tension and a predominant propagation direction. This
result is needed to get the new results on viscous 1D KdV equations in
section \ref{sec.3}, the viscous KP equation (2D generalization of KdV
with a weak transverse propagation) derived in section \ref{sec.4},
and the viscous Boussinesq equation derived in section
\ref{sec.5}. Finally, we justify that our Boussinesq system is
consistent with the Navier-Stokes equations in Section \ref{sec.6}. All
these derivations are done for a viscous fluid with surface tension
and non-vanishing initial conditions.

\section{2D geometry with viscosity and surface tension}

\label{sec.2}

Below, we define the geometry, write the equations, and then make
these equations dimensionless (subsection \ref{subsec.2.1}). These
equations model a propagation predominantly in the $x$-direction. This
will enable us to make a linear analysis with the dispersion relation
and an asymptotic of the phase velocity in subsection
\ref{subsec.2.2}. A discussion of the relevance of viscosity and
surface tension will be given in subsection \ref{subsec.2.3}. We will
then use the results of \cite{LeMeur_14}, where the 1D viscous
isotropic Boussinesq system and KdV equation are derived, so as to
state the weakly transverse 2D viscous Boussinesq system (subsection
\ref{subsec.2.4}). Our present derivation includes surface tension,
viscosity and a propagation mainly along $x$ and weakly along
$y$. Since the already published results are very close, most proofs
are omitted.

\subsection{Navier-Stokes equations}

\label{subsec.2.1}

Let $\tilde{\vite}=(\tilde{u},\tilde{v},\tilde{w})$ be the velocity of a
fluid in a 3-D domain $\tilde{\Omega}= \{ (\tilde{x},\tilde{y},\tilde{z}) \; /
\; (\tilde{x},\tilde{y}) \in \mathbb{R}^2, \; \tilde{z} \in
(-h,\tilde{\eta}(\tilde{x},\tilde{y},\tilde{t}))\}$. So we assume the bottom is
flat and the free surface is characterized by $\tilde{z} =
\tilde{\eta}(\tilde{x},\tilde{y},\tilde{t})$ with
$\tilde{\eta}(\tilde{x},\tilde{y},\tilde{t})> -h$ (the bottom does not get dry).
Let $\tilde{p}$ be the pressure and
$\tilde{\mathbf{D}}[\tilde{\vite}]$ the symmetric part of the velocity
gradient. The dimensionless domain is illustrated in Fig. \ref{fig1}.
\begin{figure}[htbp]
\begin{center}
\includegraphics[width=7cm]{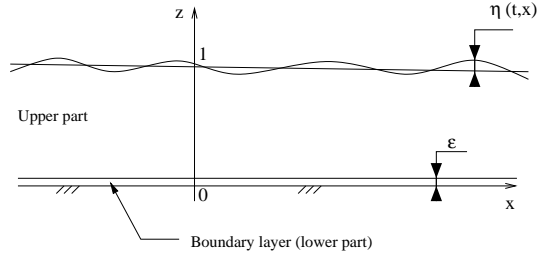}
\end{center}
\caption{The dimensionless domain}
\label{fig1}
\end{figure}
We also denote $\rho$ the density of the fluid, $\nu$ the viscosity of
the fluid, $g$ the gravitational acceleration, $\mathbf{k}$ the unit
vertical vector, $\mathbf{n}$ the outward unit normal to the upper
frontier of $\tilde{\Omega}$, $\sigma$ the surface tension
coefficient, and $\tilde{p}_{atm}$ the atmospheric pressure. The original
system reads:
\begin{equation}
\label{Ga.1}
\left\{
\begin{array}{ll}
\rho \left( \DP{\tilde{\vite}}{\tilde{t}} + \tilde{\vite} . \tilde{\nabla} \tilde{\vite} \right) -\nu \tilde{\Delta} \tilde{\vite} +\tilde{\mathbf{\nabla}} \tilde{p} = -\rho g \mathbf{k} & \mbox{ in } \tilde{\Omega}, \\
\widetilde{\mbox{div }} \tilde{\vite} = 0 & \mbox{ in } \tilde{\Omega}, \\
\left( -\tilde{p} \mathbf{I} +2 \nu \mathbf{\tilde{D}}[\tilde{\vite}] \right) . \mathbf{n}=-\tilde{p}_{atm} \mathbf{n} +\Frac{\sigma}{R} \mathbf{n}& \mbox{ on } \tilde{z} = \tilde{\eta}(\tilde{x},\tilde{t}), \\
\tilde{\eta}_{\tilde{t}} +\tilde{u} \tilde{\eta}_{\tilde{x}} +\tilde{v} \tilde{\eta}_{\tilde{y}} -\tilde{w} = 0 & \mbox{ on } \tilde{z} = \tilde{\eta}(\tilde{x},\tilde{t}), \\
\tilde{\vite} = 0 & \mbox{ on } \tilde{z} = -h,
\end{array}
\right.
\end{equation}
where we write the second order tensors and the vectors with bold
letters. The differentiated functions are denoted either $\partial f
/ \partial x$, $f_x$, or $\partial_x f$. The surface tension term
includes $R$ given by geometrical computations:
\[
\Frac{1}{R}= \Frac{(1+\tilde{\eta}_{\tilde{y}}^2)\tilde{\eta}_{\tilde{x}\tilde{x}}+(1+\tilde{\eta}_{\tilde{x}}^2)\tilde{\eta}_{\tilde{y}\tilde{y}}-2\tilde{\eta}_{\tilde{x}} \tilde{\eta}_{\tilde{y}} \tilde{\eta}_{\tilde{x}\tilde{y}}}{(1+\tilde{\eta}_{\tilde{x}}^2+\tilde{\eta}_{\tilde{y}}^2)^{3/2}}.
\]
Of course, we need to add an initial condition and conditions
at infinity.

So as to get dimensionless fields and variables, we need to choose a
characteristic horizontal length $l$ which is the wavelength along the
propagation direction, a characteristic vertical length $h$ which is
the water's height, and the amplitude $A$ of the propagating
perturbation. Moreover, let $U, V, W, P$ be the characteristic
horizontal velocity along $x$, horizontal velocity along $y$, vertical
velocity and pressure respectively. Since the zeroth order equation is
a wave equation of dispersion relation $\omega=c_0
k_x(1+k_y^2/2/k_x^2)+o(k_y^2)$, one may infer that a good scaling for
the $y$ direction is if the extra term is of the same order of
magnitude as $\varepsilon$. So $k_y=O(\varepsilon^{1/2})$ and the
interesting scale for $\tilde{y}$ is $O(\varepsilon^{-1/2})$ more than
for the $x$ direction. We may then define:
\[
\begin{array}{c}
  c_0 = \sqrt{gh}, \; \varepsilon = \Frac{A}{h}, \; \beta =\Frac{h^2}{l^2}, \; U =\varepsilon c_0, V= \varepsilon c_0, \\
  W=\sqrt{\varepsilon} c_0, \; P=\rho g A, \; \mbox{Re} =\Frac{\rho c_0 h}{\nu},\Bo=\Frac{\rho g l^2}{\sigma},
\end{array}
\]
where $c_0$ is the phase velocity. Then, one may make the fields
dimensionless and unscaled:
\[
\tilde{u} = U u, \; \tilde{v} = V v, \;\tilde{w}= W w,\; \tilde{p} = \tilde{p}_{atm} -\rho g \tilde{z} + P p, \; \tilde{\eta} = A \eta,
\]
and the variables too:
\begin{equation}
\label{Ga.4}
\tilde{x} = l x, \; \tilde{y} = \varepsilon^{-1/2} l y, \;\tilde{z}= h(z-1), \; \tilde{t}= t \, l/c_0.
\end{equation}
We also make the Boussinesq approximation ($\beta$ and $\varepsilon$
of the same order of magnitude) and even $\beta = \varepsilon$. Notice
that we take the scale used by \cite{Lannes_Saut_06} ($V=\varepsilon
c_0$) and not the one taken by \cite{Johnson_97} ($V=\varepsilon^{3/2}
c_0$). The difference will be highlighted in Remark
\ref{rem_Lannes_Johnson}.

With these definitions, the new system with
the new fields, variables and outward unit normal, still denoted
$\mathbf{n}$, writes in the new domain $\Omega_t=\{(x,y,z), (x,y) \in
\mathbb{R}^2, \; z\in (0, 1+\varepsilon \eta(x,y,t)) \}$ (see Figure
\ref{fig1}):
\begin{equation}
\label{Ga.5}
\left\{
\begin{array}{ll}
u_t+\varepsilon u u_x +\varepsilon^{3/2} v u_y + w u_z -\Frac{\sqrt{\varepsilon}}{\mbox{Re}}\left( u_{xx} +\varepsilon u_{yy}+\Frac{u_{zz}}{\varepsilon} \right) +p_x =0 & \mbox{ in } \Omega_t, \\
v_t+\varepsilon u v_x +\varepsilon^{3/2} v v_y + w v_z -\Frac{\sqrt{\varepsilon}}{\mbox{Re}} \left( v_{xx} +\varepsilon v_{yy}+\Frac{v_{zz}}{\varepsilon} \right) +\Frac{p_y}{\sqrt{\varepsilon}} =0 & \mbox{ in } \Omega_t, \\
w_t+\varepsilon u w_x +\varepsilon^{3/2} v w_y + w w_z -\Frac{\sqrt{\varepsilon}}{\mbox{Re}}(w_{xx} +\varepsilon w_{yy}+\Frac{w_{zz}}{\varepsilon}) +p_z =0 & \mbox{ in } \Omega_t,  \\
u_x +\sqrt{\varepsilon} v_y +w_z/\varepsilon =0& \mbox{ in } \Omega_t, \\
(\eta-p) \mathbf{n}  + \frac{1}{\mbox{Re}} \left( \begin{array}{ccc}
2\sqb u_x & (\varepsilon u_y+\sqb v_x) & u_z+ w_x  \\
(\varepsilon u_y+\sqb v_x) & 2 \varepsilon v_y & (v_z+\sqrt{\varepsilon} w_y)\\
u_z+ w_x &  (v_z+\sqrt{\varepsilon} w_y) & 2 \Frac{w_z}{\sqrt{\varepsilon}} \end{array} \right) . \mathbf{n}  & \\[7mm]
\hspace*{2cm}  = \Frac{1}{\Bo} \Frac{(1+\varepsilon^4 \eta_y^2)\eta_{xx}+\varepsilon (1+\varepsilon^3 \eta_x^2)\eta_{yy}-2\varepsilon^4 \eta_x \eta_y \eta_{xy}}{(1+\varepsilon^3 \eta_x^2+\varepsilon^4 \eta_y^2)^{3/2}} . \mathbf{n}   & \mbox{ on } z=1+\varepsilon \eta,\\
\eta_t+\varepsilon u \eta_x+\varepsilon^{3/2} v \eta_y- w/\varepsilon = 0 & \mbox{ on } z=1+\varepsilon \eta, \\
\vite=0 & \mbox{ on } z=0.
\end{array}
\right.
\end{equation}
Simple computations give the outward non-unit normal
$\mathbf{n}=(-\varepsilon \eta_x, - \varepsilon \eta_y, 1)$. 

\subsection{Linear theory}

\label{subsec.2.2}

The whole subsection below is a straightforward modification of the
case of a viscous fluid without surface tension, fully studied in
\cite{LeMeur_14}.

\subsubsection{Dispersion relation}
We are looking for small fields. So we linearize the system
(\ref{Ga.5}). The dynamic condition at the free boundary can then be stated
more explicitely. We get:
\begin{equation}
\label{Ga.6}
\left\{
\begin{array}{ll}
u_t -\frac{\sqrt{\varepsilon}}{\mbox{Re}}( u_{xx} +\varepsilon u_{yy} + u_{zz} / \varepsilon ) +p_x =0 & \mbox{ in } \mathbb{R}^2 \times [0,1], \\
v_t -\Frac{\sqrt{\varepsilon}}{\mbox{Re}}( v_{xx} +\varepsilon v_{yy} + v_{zz} / \varepsilon ) +\sqrt{\varepsilon} p_y =0 & \mbox{ in } \mathbb{R}^2 \times [0,1], \\
w_t -\Frac{\sqrt{\varepsilon}}{\mbox{Re}}( w_{xx} +\varepsilon w_{yy} + w_{zz} / \varepsilon ) +p_z =0 & \mbox{ in } \mathbb{R}^2 \times [0,1], \\
u_x +\sqrt{\varepsilon} v_y+w_z/\varepsilon =0 & \mbox{ in } \mathbb{R}^2 \times [0,1], \\
(u_z+w_x)/\RE = 0 & \mbox{ on } z=1, \\
(v_z+\sqrt{\varepsilon} w_y)/\RE = 0 & \mbox{ on } z=1, \\
(\eta -p) +\Frac{2  w_z}{\mbox{Re}\sqrt{\varepsilon}}=\Frac{1}{\Bo}(\eta_{xx}+\varepsilon \eta_{yy}) & \mbox{ on } z=1, \\
\eta_t- w/\varepsilon = 0 & \mbox{ on } z=1,\\
\vite=0 & \mbox{ on } z=0.
\end{array}
\right.
\end{equation}
In order to eliminate the pressure, we differentiate (\ref{Ga.6})$_1$
with respect to $z$ and (\ref{Ga.6})$_3$ with respect to $x$ and
compute their difference so as to simplify $p_{xz}$. Symmetrically, we
differentiate (\ref{Ga.6})$_2$ with respect to $z$ and
(\ref{Ga.6})$_3$ with respect to $y$ and compute their difference
(with a $\sqrt{\varepsilon}$ coefficient). Pressure is eliminated in
two equations on three:
\[
\begin{array}{l}
u_{zt}- w_{xt}-\frac{\sqb}{\RE}\left( (u_{xxz}-w_{xxx})+ \varepsilon (u_{yyz}-w_{yyx})+(u_{zzz}-w_{xzz})/\varepsilon \right) =0,\\
v_{zt}-\sqrt{\varepsilon} w_{yt}-\frac{\sqb}{\RE}\left( (v_{xxz}-\sqb w_{xxy})+ \varepsilon (v_{yyz}-\sqb w_{yyy})+(v_{zzz}- \sqb w_{yzz})/ \varepsilon \right) =0.
\end{array}
\]
So as to eliminate $u$ and $v$ thanks to the incompressibility
(\ref{Ga.6})$_4$ from the previous system, we differentiate the first
equation with respect to $x$ and the second with respect to $y$. After
some simplifications, a good combination of the two gives (thanks to
(\ref{Ga.6})$_4$):
\begin{equation}
\label{Ga.7}
(\partial_z^2 +\varepsilon \partial_x^2+ \varepsilon^2 \partial_y^2)(-\RE \sqb \partial_t +(\partial_z^2 +\varepsilon \partial_x^2+ \varepsilon^2 \partial_y^2))w=0.
\end{equation}
Let $w$ be of the form $\mathcal{A}(z) \exp{i(k_x x+k_y y -\omega t)}$
with non-negative $k_x,k_y$ and a (complex) pulsation $\omega$. We can
define a complex parameter with non-negative real part, similar to the
one used by \cite{KM_75}:
\begin{equation}
\label{Ga.7.5}
\mu^2=\varepsilon k^2 - i \omega \RE \sqb ,
\end{equation}
where $k^2=k_x^2+\varepsilon k_y^2$. Thanks to this notation, the solutions
of (\ref{Ga.7}) are such that
\begin{multline}
\label{Ga.8}
\mathcal{A}(z)=  C_1 \cosh{\sqb k(z-1)} +C_2 \sinh{\sqb k(z-1)} \\
    +C_3 \cosh{\mu (z-1)} +C_4 \sinh{\mu(z-1)}.
\end{multline}
The surface tension did not appear yet because it is only in the
boundary conditions.

Up to now we have eliminated $u, v$ and $p$ only in the volumic
equations. We still have to use the boundary conditions of
(\ref{Ga.6}) to find the conditions on the remaining field $w$.

Easy computations on the boundary conditions of (\ref{Ga.6}), similar
to the ones done in \cite{LeMeur_14}, give:
\begin{equation}
\label{Ga.9}
\begin{array}{l}
w_z(0)=0, \\
w(0)=0, \\
w_{zz}(1)-\varepsilon w_{xx}(1)- \varepsilon^2 w_{yy}(1)=0, \\
w_{xx}+\varepsilon w_{yy}(1) -w_{ztt}(1)+\frac{3 \sqb}{\RE} (w_{xxzt}+\varepsilon w_{yyzt})+\frac{1}{\RE \sqb} w_{zzzt}(1) =\frac{1}{\Bo}(\partial_x^2+\varepsilon \partial_y^2)^2w .
\end{array}
\end{equation}
The solutions (\ref{Ga.8}) satisfies a homogeneous linear system in
the constants $C_1, C_2, C_3, C_4$. Its matrix is:

\begin{equation}
\label{Ga.9.5}
\left(
\begin{array}{cccc}
\sqb k\sinh{(\sqb k)} & -\sqb k \cosh{(\sqb k)} & \mu \sinh{\mu} & -\mu \cosh{\mu} \\
\cosh{(\sqb k)} & -\sinh{(\sqb k)} & \cosh{\mu} & -\sinh{\mu} \\
2k^2 \varepsilon & 0 & \mu^2+\varepsilon k^2 & 0 \\
-k^2-\frac{k^4}{\Bo} & \sqb \omega^2 k +\frac{2 i \varepsilon \omega k^3}{\RE} & -k^2 -\frac{k^4}{\Bo} & \frac{2 \mu \sqb i \omega k^2}{\RE}
\end{array}
\right),
\end{equation}
where $k^2=k_x^2+\varepsilon k_y^2$. It suffices to compute its
determinant to get the dispersion relation:
\begin{multline}
\label{Ga.10}
4\varepsilon k^2 \mu (\varepsilon k^2+\mu^2) +4\mu k^3 \varepsilon^{3/2}(\mu \sinh{(k\sqb)} \sinh{\mu}-k \sqb \cosh{(k\sqb)} \cosh{\mu}) \\
-(\varepsilon k^2 +\mu^2)^2(\mu \cosh{(k\sqb)} \cosh{\mu}-k \sqb \sinh{(k \sqb)} \sinh{\mu}) \\
-(k+k^3/\Bo) \sqb \RE^2(\mu \sinh{(k\sqb)}  \cosh{\mu}-k\sqb \cosh{(k\sqb)} \sinh{\mu})=0.
\end{multline}
This relation is identical to the one of \cite{LeMeur_14} appart from
the surface tension term and the new definition of $k$ for the 3D
geometry ($k^2=k_x^2+\varepsilon k_y^2$). Our process of non-dimensionnalizing
makes a difference between $x$ and $z$. This is why \cite{KM_75} have
$k$ terms, and we have $k \sqb$ terms instead.

\subsubsection{Asymptotic of the phase velocity (very large Re)}

In this subsection, we state the following Proposition concerning the
phase velocity:
\begin{proposition}
\label{prop1}
Under the assumptions
\begin{align*}
& k \sqb \, \RE \; c  \rightarrow + \infty, \\
& k =  \; O(1), \\
&\varepsilon \rightarrow  \;0, \\
&\RE \rightarrow  + \infty, \\
& c =  \;O(1)\; \;  (\mbox{and } c \mbox{ bounded away from } 0),\\
& \Bo \mbox{ bounded away from } 0,
\end{align*}
if there exists a complex phase velocity $c=\omega/k$ solution of (\ref{Ga.10}),
then it is such that:
\begin{equation}
\label{tl.2}
c =\sqrt{\left(1+\Frac{k^2}{\Bo} \right) \Frac{\tanh{(k \sqb)}}{k \sqb}}-\Frac{e^{i \pi
/4}\, \RE^{-1/2}(k \sqb)^{1/4}}{2\tanh^{3/4}{(k\sqb)}} + o(\varepsilon^{-1/4} \RE^{-1/2}).
\end{equation}
Moreover, the decay rate in our finite-depth geometry does not depend
on the surface tension but only on the viscosity. It is:
\begin{equation}
\label{tl.2.5}
\mbox{Im}(\omega)=\mbox{Im}(kc)=\frac{-1}{2\sqrt{2}} \frac{k^{5/4}\varepsilon^{1/8}}{\sqrt{\RE}\tanh^{3/4}{(k\sqb)}}+o(\varepsilon^{-1/4}\RE^{-1/2}).
\end{equation}
\end{proposition}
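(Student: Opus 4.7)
The plan is to parallel the asymptotic analysis of \cite{LeMeur_14} while carrying the additional surface tension contribution $1+k^2/\Bo$ through the algebra. I view (\ref{Ga.10}) as an algebraic equation for $c=\omega/k$, with $\mu^2=\varepsilon k^2-i\omega\RE\sqb$ treated as a function of $c$. First I control $\mu$: under the stated hypotheses, $-ikc\RE\sqb$ dominates $\varepsilon k^2$ in $\mu^2$, and choosing the branch with non-negative real part gives $\mu\sim e^{-i\pi/4}\sqrt{kc\RE\sqb}$. Hence $|\mu|\to+\infty$ and $\mbox{Re}(\mu)\to+\infty$, so that $\tanh\mu=1+O(e^{-2\mbox{Re}(\mu)})$, exponentially close to $1$.

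Next I divide (\ref{Ga.10}) through by $\cosh\mu$. The first term $4\varepsilon k^2\mu(\varepsilon k^2+\mu^2)$ acquires a factor $1/\cosh\mu$ and becomes exponentially negligible, while each of the three remaining brackets of the form $\mu A\sinh\mu-B\cosh\mu$ (or its mirror image) collapses to $\mu A-B$ modulo exponentially small remainders from $\tanh\mu-1$. An order analysis based on $|\mu|\sim\RE^{1/2}\varepsilon^{1/4}$ then shows that only the two terms with prefactors $(\varepsilon k^2+\mu^2)^2$ and $\sqb\RE^2$ are of leading size, the $\varepsilon^{3/2}$-term sitting two orders below them. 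Setting them equal and using $\varepsilon k^2+\mu^2\sim-ikc\RE\sqb$ yields
\begin{equation*}
k^2 c^2\varepsilon\,\cosh(k\sqb) \;=\; (1+k^2/\Bo)\,k\sqb\,\sinh(k\sqb),
\end{equation*}
equivalent to $c_0^2=(1+k^2/\Bo)\tanh(k\sqb)/(k\sqb)$, which gives the first term of (\ref{tl.2}).

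For the second order, I write $c=c_0+c_1\RE^{-1/2}+\cdots$ and reinstate the $1/\mu$ contributions of the bracketed terms; the key ratio $k\sqb/\mu$ is of order $\varepsilon^{1/4}\RE^{-1/2}$, which is precisely what drives an $\RE^{-1/2}$ correction in $c$. Linearizing the ratio $\bigl(\mu\sinh(k\sqb)-k\sqb\cosh(k\sqb)\bigr)/\bigl(\mu\cosh(k\sqb)-k\sqb\sinh(k\sqb)\bigr)$ around $\tanh(k\sqb)$ using $\tanh-\coth=-1/(\sinh\cosh)$, taking the square root of the resulting equation and substituting the leading value $\mu\sim e^{-i\pi/4}\sqrt{kc_0\RE\sqb}$ identifies the explicit second term of (\ref{tl.2}). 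Passing to the imaginary part through $e^{i\pi/4}=(1+i)/\sqrt{2}$ converts the formula for $c_1$ into (\ref{tl.2.5}); the Bond number is absent there, in keeping with the physical fact that the decay rate is set by viscosity alone.

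The delicate step is the book-keeping: I must verify that every discarded quantity is controlled by the announced remainder $o(\varepsilon^{-1/4}\RE^{-1/2})$. The candidates are the term $4\varepsilon k^2\mu(\varepsilon k^2+\mu^2)/\cosh\mu$ dropped in the reduction, the $O(e^{-2\mbox{Re}(\mu)})$ corrections from $\tanh\mu\mapsto 1$, the quadratic pieces in $k\sqb/\mu$, and the $O(\varepsilon k^2/\mu^2)$ corrections inside $(\varepsilon k^2+\mu^2)^2$. Each must be rewritten in the three small parameters $\varepsilon$, $\RE^{-1}$ and $1/\mu$ and compared to the announced error. A secondary subtlety is fixing and maintaining the correct branch of the square root throughout, since both the factor $e^{i\pi/4}$ in (\ref{tl.2}) and the sign of the decay rate in (\ref{tl.2.5}) depend on it.
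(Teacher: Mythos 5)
Your reconstruction is correct and follows exactly the route the paper intends: the text states that, apart from the surface-tension factor $1+k^2/\Bo$ and the redefinition of $k$, the proof is that of Proposition~1 of \cite{LeMeur_14}, and your argument --- division by $\cosh\mu$, leading balance between the $(\varepsilon k^2+\mu^2)^2$ and $\sqb\,\RE^2$ brackets giving $c_0^2=(1+k^2/\Bo)\tanh(k\sqb)/(k\sqb)$, then the $k\sqb/\mu$ correction with $\mu\sim e^{-i\pi/4}\sqrt{kc_0\RE\sqb}$ --- is precisely that computation carried through with the extra $k^2/\Bo$. One point worth recording in your final bookkeeping: the correction term you obtain carries a factor $c_0^{1/2}=\bigl((1+k^2/\Bo)\tanh(k\sqb)/(k\sqb)\bigr)^{1/4}$, so the stated second term of (\ref{tl.2}) (which has no Bond-number dependence) is only recovered up to the announced $o(\varepsilon^{-1/4}\RE^{-1/2})$ remainder when $1/\Bo=o(1)$, a restriction the paper imposes later but not in the hypothesis ``$\Bo$ bounded away from $0$''.
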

Notice that our geometry is not infinite. So our formula may be different
from Boussinesq's and Lamb's.

Appart from the surface tension terms and the change of notation of
$k$, the proof is not different from the one of Proposition 1 in
\cite{LeMeur_14}.

To what extent is surface tension relevant ?

\subsection{Relevance of surface tension and viscosity}
\label{subsec.2.3}

As was explained in the previous subsection, surface tension yields
only a real term in the phase velocity. So it does not give
dissipation, but only dispersion. On the contrary, viscosity
influences both the real and the imaginary part of the phase velocity.

We choose two different fluids (water and mercury) in order to discuss
the relevance of surface tension and viscosity. Their physical
parameters are listed in Table \ref{table_1}.

\begin{table}[htb]
\centering
\begin{tabular}{|c|c|c|c|c|c|c|c|}
\hline
 & $\sigma (Nm^{-1})$ & $\rho ({\rm kg}\,m^{-3})$ & $\nu ({\rm Pa}\, s)$ & 1/Re & 1/Bo \\
\hline
water $20^o C$ &  $7.3\,10^{-2}$ & $10^{3}$ & $10^{-3}$ & $3.2\,10^{-7}/h^{3/2}$ & $7.4\,10^{-6}/l^2$ \\
mercury $20^o C$ &  $4.4\,10^{-1}$ & $1.3\,10^{4}$ & $1.56\,10^{-3}$ & $3.8\,10^{-8}/h^{3/2}$ & $3.5\,10^{-6}/l^2$ \\
\hline
\end{tabular}
\caption{Physical parameters for two fluids}
\label{table_1}
\end{table}
Given $\sigma, \rho, \nu$, the Re and Bo number are evaluated in Table
\ref{table_1}. Depending on $h$ or $l$, the Re or Bo dimensionless
parameters are greater or less than 1. The critical values for each
of them are given in Table \ref{table_2}.
\begin{table}[htb]
\centering
\begin{tabular}{|c|c|c|}
\hline
 & $h_{crit-\RE}$& $l_{crit-\Bo}$  \\
\hline
water $20^o C$ & $4.6\,10^{-5}$ & $2.7\,10^{-3}$\\
mercury $20^o C$ & $1.1\,10^{-5}$ & $1.9\,10^{-3}$\\
\hline
\end{tabular}
\caption{Critical values of length (in $m$) for which 1/Re or 1/Bo is 1.}
\label{table_2}
\end{table}
It appears that viscosity is not so relevant as surface
tension. Furthermore, so as to compare $1/$Bo and $\varepsilon$, and
under the Boussinesq approximation, one may compute the ratio
\[
\frac{1}{\Bo}/ \varepsilon=\frac{\sigma}{\rho g h^2}.
\]
The critical value of $h$, for which this ratio is 1, is
$h_{app}=\sqrt{\sigma/(\rho g)}$. This other critical value
is either $2.7\, mm$ for water or $1.9\, mm$ for mercury. In other words
if $h>h_{app}$, then the surface tension term is irrelevant
and even less important than gravity measured by $\varepsilon$ in the
equation. In most current experiments, $h$ is more than $0.1m$ and so,
surface tension should (globally) not be taken into account. As
D. Lannes says: ``for coastal waves of characteristic length $L_x=10m$,
capillary effects represent only $0.0003\%$ of the gravity effects''
(\cite{Lannes_2013} section 9.1.2).

Surface tension is irrelevant for flows with $h_0$ more than $1 \,
cm$. Yet, even if the initial flow satisfies this, it may or may not
remain true. For instance, in case of flows with ripples created by
the wind or in case of wave breaking, very short wavelengths
appear. See section 9.1.2 of \cite{Lannes_2013} for some references.

In the case of a flow in a shallow channel (some $mm$ height), and for
mercury, a depression solitary wave was predicted (as early as the
initial article of Korteweg and de Vries \cite{KdV}) and observed in
\cite{Falcon_02}. In this experiment, $A=0.064\, mm$, $h=2.12\,
mm$. As a consequence, Re=$2547$ and viscosity should be taken into
account.

In the following, we assume $1/\Bo$ between $O(\varepsilon)$ ($l \sim
1\, mm$) and $O(\varepsilon^2)$ ($l \sim 1\,cm$).

\subsection{Viscous Boussinesq System}
\label{subsec.2.4}

Let us first discuss the regime binding $\varepsilon, \RE, \Bo$.

The complex phase velocity (\ref{tl.2}) contains gravitational and
viscous terms that we want to compare so as to find the regime at
which their variations are of the same order of magnitude. The first
term is the gravitational term ($\sqrt{(1+k^2/\Bo)\tanh{(k \sqb)} /(k
  \sqb)}$) which may be expanded when $\varepsilon$ tends to zero:
$1-k^2 \varepsilon( 1-3/(\varepsilon \Bo))/6 +
O(\varepsilon^2)+O(\Bo^{-2})$. The second is purely viscous and can be
expanded: $-\sqrt{2}(1+i)(4 \sqrt{k})^{-1}(\RE \, \sqb)^{-1/2}+o(\RE
\, \sqb)^{-1/2})$. So if $1/\Bo = O(\varepsilon)$ or less but
$1/(\varepsilon \Bo ) \not \sim 1/3$, the variations of $c$ on the
gravitational and on the viscous effects are of the same order of
magnitude when $\varepsilon(1-3/(\varepsilon \Bo))$ and $(\RE \,
\sqb)^{-1/2}$ are of the same order. In this regime of very large Re,
studied hereafter, the dependence of $\RE$ on $\varepsilon$ is such
that (to simplify, we assume here $1/\Bo =o(\varepsilon)$):
\begin{equation}
\label{tl.7}
\RE \sim \varepsilon^{-5/2}.
\end{equation}
This would be wrong if $1/(\varepsilon \Bo) \sim 1/3$ as is well-known for
non-viscous fluids. In that case, we would need to go further in the
expansion. But since it was assumed above that $1/ \Bo$ is between
$O(\varepsilon^2)$ and $O(\varepsilon)$, we are driven to choose the
regime (\ref{tl.7}) and exclude $1/(\varepsilon \Bo)\sim 1/3$. 

Our main purpose here is to state an asymptotic system of reduced size
from the global Navier-Stokes equations in the whole {\em moving}
domain with surface tension and a predominant propagation direction.
\begin{proposition}
\label{Prop.1}
Let $\mathbf{x}=(x,y)\in \mathbb{R}^2$. Let $\eta(\mathbf{x},t)$ be
the free boundary's height. Let
$(u^{b,0}(\mathbf{x},\gamma),v^{b,0}(\mathbf{x},\gamma))$ for $\gamma
\in (0,+\infty)$
(resp. $(u^{u,0}(\mathbf{x},z),v^{u,0}(\mathbf{x},z))$ for $z\in
(0,1+\varepsilon \eta(\mathbf{x},t))$) be the initial horizontal
velocity in the boundary layer (resp. in the upper part of the
domain). Let
$f(\mathbf{x},\gamma):=u^{b,0}(\mathbf{x},\gamma)-u^{u,0}(\mathbf{x},z=0)$
and
$g(\mathbf{x},\gamma):=v^{b,0}(\mathbf{x},\gamma)-v^{u,0}(\mathbf{x},z=0)$. If
for any $\mathbf{x}\in \mathbb{R}^2$, $\gamma \mapsto f$ and $\gamma
\mapsto g$ are uniformly continuous in $\gamma$ and $f_x, g_y \in
L^1_{\gamma}(\mathbb{R}^+)$, then the solution of the Navier-Stokes
equation with this given initial condition satisfies the weakly
transverse Boussinesq system:
\begin{equation}
\label{Boussinesq_1}
\begin{array}{l}
u_t+\eta_x -\Frac{\eta_{xxx}}{\Bo}+\varepsilon u u_x+\varepsilon^{3/2}v u_y-\varepsilon \eta_{xtt}\Frac{(z^2-1)}{2} =  O(\varepsilon^2)+O(\varepsilon/\Bo),\\
v_t+\sqb \eta_y -\Frac{\sqb \eta_{xxy}}{\Bo} +\varepsilon u v_x+\varepsilon^{3/2}v v_y-\varepsilon \eta_{ytt}\Frac{(z^2-1)}{2} =  O(\varepsilon^2)+O(\varepsilon/\Bo),\\
\eta_t+u_x(\mathbf{x},z,t)-\Frac{\varepsilon}{2} \eta_{xxt} (z^2-\frac{1}{3})+ \varepsilon (u \eta)_x +\sqb v_y+\varepsilon^{\frac{3}{2}}(v \eta)_y-\Frac{\varepsilon}{\sqrt{\pi R}} (u_x+\sqb v_y) \ast \Frac{1}{\sqrt{t}}   \\
\hspace*{7mm}+\Frac{2\varepsilon}{\sqrt{\pi}}\Int_{\gamma''=0}^{+\infty} \mbox{\rm div } \left(\begin{array}{c}
(u^{b,0}(\mathbf{x},\gamma'') -u^{u,0}(\mathbf{x},z=0))\\
\sqb (v^{b,0}(\mathbf{x},\gamma'') -v^{u,0}(\mathbf{x},z=0))
\end{array}\right)
 \Int_{\gamma'=0}^{\sqrt{\frac{\Rb}{4t}}\gamma''} e^{ -\gamma'^2}{\rm d}\gamma'{\rm d}\gamma'' =  o(\varepsilon),
\end{array}
\end{equation}
where the convolution, denoted with $\ast$, is in time, the parameters
$\varepsilon, $ and $R$ have been defined and $z \in (0,1+\varepsilon
\eta(\mathbf{x},t))$.
\end{proposition}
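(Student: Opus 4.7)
The plan is to adapt the matched-asymptotic scheme of \cite{LeMeur_14} to the weakly transverse 3D geometry and to surface tension. Decompose the velocity and pressure into an outer (essentially inviscid) part, valid away from the bottom, plus a boundary-layer correction concentrated near $z=0$ where the no-slip condition forces a thin layer: write $u = u^{\text{out}}(x,y,z,t) + u^{\text{bl}}(x,y,\gamma,t)$ with a stretched vertical variable $\gamma$ chosen so that the boundary-layer equation reduces at leading order to a 1D heat equation on $(\gamma,t)$, and with $u^{\text{bl}}\to 0$ as $\gamma \to +\infty$. The regime $\RE \sim \varepsilon^{-5/2}$ ensures that viscous terms are negligible in the outer region while still producing an $O(\varepsilon)$ Stokes contribution at the bottom.

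For the outer part, formally expand $u = u^0 + \sqrt{\varepsilon}\, u^{1/2} + \varepsilon\, u^1 + O(\varepsilon^{3/2})$ and similarly for $v,w,p,\eta$; substitute into \eqref{Ga.5} and use the incompressibility $u_x + \sqrt{\varepsilon}v_y + w_z/\varepsilon = 0$ together with the surface kinematic condition and bottom no-penetration to show that $u^0,v^0$ are $z$-independent at leading order. Expanding the vertical momentum equation, the curvature term, and the dynamic boundary condition gives the hydrostatic-plus-dispersive-plus-capillary pressure
\[
p = \eta - \tfrac{1}{\Bo}\bigl(\eta_{xx}+\varepsilon\,\eta_{yy}\bigr) - \varepsilon\,\eta_{tt}\,\tfrac{z^2-1}{2} + O(\varepsilon^2).
\]
Differentiating in $x$ and in $y$ (the latter picking up a $\sqrt{\varepsilon}$ prefactor because of the anisotropic scaling $\tilde y = \varepsilon^{-1/2}l\,y$), and substituting into the horizontal momentum equations, yields the first two equations of \eqref{Boussinesq_1}, including the $\sqrt{\varepsilon}\,\eta_{xxy}/\Bo$ surface tension correction in the $v$ equation.

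For the boundary layer, the leading-order equation is a 1D heat equation in $(\gamma,t)$ with Dirichlet data $u^{\text{bl}}(\gamma=0,t) = -u^{\text{out}}(z=0,t)$ enforcing no-slip and initial datum $f = u^{b,0}-u^{u,0}|_{z=0}$. The classical Duhamel representation on the half-line gives two contributions when one integrates $u^{\text{bl}}_x$ in $\gamma$: a Stokes half-derivative response to the outer trace, producing the convolution $(u_x+\sqrt{\varepsilon}v_y)\ast 1/\sqrt{t}$, and a transient from the initial mismatch, giving the double integral involving $\int_0^{\sqrt{R/4t}\,\gamma''} e^{-\gamma'^2}\,d\gamma'$. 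Combining outer and boundary-layer contributions in the vertically integrated incompressibility, and applying the kinematic condition $\eta_t = w/\varepsilon$ at $z=1+\varepsilon\eta$, yields the $\eta$-equation of \eqref{Boussinesq_1}; the coefficient $(z^2-1/3)$ in front of $\eta_{xxt}$ comes from the $O(\varepsilon)$ correction to $w$ obtained by integrating incompressibility from $z=0$ and evaluating at the free surface.

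The main obstacle is the rigorous treatment of the boundary-layer mismatch in the weakly transverse setting: the uniform continuity of $f,g$ in $\gamma$ and the $L^1_\gamma$ control of $f_x,g_y$ are exactly the hypotheses needed to differentiate under the Duhamel integral, to interchange $\gamma$-integration with $x,y$-differentiation, and to ensure that the remainder in the $\eta$-equation is truly $o(\varepsilon)$ rather than only formally so. These same conditions guarantee that the Stokes convolution with $1/\sqrt{t}$ applied to $u_x + \sqrt{\varepsilon}v_y$ is a well-defined function for $t>0$.
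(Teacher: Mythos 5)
Your proposal is correct and follows essentially the same route as the paper, which simply invokes the derivation of \cite{LeMeur_14} (outer region plus bottom boundary layer in a stretched variable $\gamma$, with the heat-equation representation producing both the $1/\sqrt{t}$ convolution and the double-integral transient from the initial velocity mismatch), adapted to the weakly transverse scaling and the capillary pressure term. The hypotheses on $f$ and $g$ play exactly the role you assign them, so there is nothing to add beyond what the cited proof already contains.
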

With the change of notation $k^2=k_x^2+k_y^2$, the whole proof is the
very same as in \cite{LeMeur_14} and is omitted.

Had we proceeded in the same way as \cite{KM_75}, we would have
distinguished two subdomains: the upper part ($z > \varepsilon$) where
viscosity can be neglected, and the lower part ($0 < z < \varepsilon$)
which is a boundary layer at the bottom and where viscosity must be
taken into account. The resolution in each part would have enabled to
match the solution on the common boundary. Instead, we assume
overlapping domains.

\begin{remark}
\label{rem_Lannes_Johnson}
As is stressed at the non-dimensionnalizing step, there are two
choices for the scale of the transverse horizontal velocity. Either
one takes $V=\varepsilon^{3/2} c_0$ (see Johnson's book
\cite{Johnson_97} for instance or many others), or one uses
$V=\varepsilon c_0$ (see the article of Lannes-Saut
\cite{Lannes_Saut_06} for instance). Firstly one must notice that
since this field is not eliminated (as $w$ is), its scale is
relevant. Moreover the latter scaling is more general than the former
one because it does not assume $v$ to be $O(\sqrt{\varepsilon})$. The
main difference is that $v_{Lannes-Saut}=\sqb v_{Johnson}$, where one
implicitely assumes that the fields $v_{Lannes-Saut}$ and
$v_{Johnson}$ are $O(1)$. To be clear, Lannes-Saut's choice amounts to
assume $v=O(1)$ although it can be written
$\sqrt{\varepsilon} \partial_y \psi$ ! If we plug this
$v=\sqrt{\varepsilon} \partial_y \psi$ in (\ref{Boussinesq_1}), we get
the same transverse Boussinesq system as \cite{Johnson_97} (p. 210)
because $\varepsilon^{3/2} v_{Lannes-Saut}$ transforms into
$\varepsilon^2 v_{Johnson}$ and go to $O(\varepsilon^2)$. So the
non-dimensionnalizing process of Lannes-Saut is weaker than the one
done by many other authors.

Lannes and Saut \cite{Lannes_Saut_06} give one more argument by
exhibiting an (odd) solution to the linearized system (in their Remark
3) such that $v=\sqb \partial_y \psi$. It is of size $O(1)$ and not
$O(\sqrt{\varepsilon})$:
\[
u=0, \; v=f'(y-\sqrt{\varepsilon}t)-f'(y-\sqrt{\varepsilon}t), \eta=f'(y-\sqrt{\varepsilon}t)+f'(y-\sqrt{\varepsilon}t).
\]
We want to notice here that the potential representation of the
velocity might have induced the idea that if the potential is
physical, then the deduced field (the velocity) is such that
$(u,v)=(\partial_x \phi, \sqb \partial_y \phi)$ and so, that
$v=O(\sqb)$. Had we started from the velocity pressure formulation,
this assumption would not be natural. Yet, such an attempt of
explanation does not match reality. Johnson \cite{Johnson_97} uses the
velocity pression representation as we do and assumes
$v=O(\sqrt{\varepsilon})$, while Lannes and Saut \cite{Lannes_Saut_06}
use the potential representation and do not make this assumption !
\end{remark}

\begin{remark}
\label{remark_Euler_NS}
It was already noticed in \cite{LeMeur_14} that the double integral
term in (\ref{Boussinesq_1}) is new and surprising because of its
dependence on the initial condition. If we assume an initial flow of
Euler type (so that $u^{b,0}-u^{u,0}=0$), the double integral
vanishes. But is it physical ? In other words, does an initial
(inviscid) flow in the boundary layer (where Navier-Stokes applies)
establish (as a Navier-Stokes flow) fast or not ? Can we assume an
initial Euler flow and a Navier-Stokes evolution PDE without loss of
generality ?

We claim that the answer is negative for at least two reasons. On the
one hand, the characteristic time for the viscous effects to appear is
roughly $T_{NSE}=\rho h_0^2/\nu$ or $T_{NSE}=\rho l^2/\nu$. Then, its
ratio with the characteristic time of the inviscid gravity flow
($l/c_0$) is either Re $\sqrt{\varepsilon}=\varepsilon^{-2}$ or $\RE /
\sqrt{\varepsilon}=\varepsilon^{-3}$ respectively. Whatever the chosen
characteristic time, this ratio is large and the flow in the boundary
layer does not establish fast enough.

On the other hand, as is argued in \cite{LeMeur_14}, the value of this
integral for a typical exponential flow in the boundary layer can be
computed. It tends to zero only like $1/\sqrt{t}$ as is classical for
Navier-Stokes flows. As a consequence, one may assume that it goes
from zero (Euler) to exponential (boundary layer) within the time
($1/\sqrt{t}$) which is large with respect to the characteristic time
of the flow.
\end{remark}

\section{Viscous KdV with surface tension}
\label{sec.3}

Starting from (\ref{Boussinesq_1}), one may assume the fields do not
depend on $y$, and $v=0$. So the flow is purely one dimensional (one
direction). One may see that the zeroth order of this Boussinesq
system is the wave equation. Then one knows that there are two waves
propagating in each direction. If we look only for the waves that
propagate to the right, one may make a change of variables suggested by
the zeroth order equation:
\[
\xi = x-t, \; \tau= \varepsilon t.
\]
Every term can easily be converted in these new coordinates appart
from the convolution (one derivative and a half integration), the new
surface tension term and the double integral. The first was treated in
a very clean way in \cite{LeMeur_14}. The second is new, but very
simple. The third one was treated in \cite{LeMeur_14}, but its
treatment is improved below. After a change of variable, this 
term writes:
\begin{equation}
\label{KdV.1}
+\Frac{2\varepsilon}{\sqrt{\pi}}\Int_{\gamma''=0}^{+\infty} \left(u^{b,0}_{x}(\xi+\frac{\tau}{\varepsilon},\gamma'') -u^{u,0}_{x}(\xi+\frac{\tau}{\varepsilon},z=0) \right) \times \Int_{\gamma'=0}^{\sqrt{\frac{R\, \varepsilon}{4\tau}} \gamma''} e^{ -\gamma'^2}{\rm d}\gamma'{\rm d}\gamma''
\end{equation}
The article \cite{LeMeur_14} argued that if the initial horizontal
velocity is localized and $\tau$ not too small, then
$u^{b,0}(\xi+\tau/\varepsilon,\gamma'')-u^{u,0}(\xi+\tau/\varepsilon,z=0)$
will be negligible in comparison with $\varepsilon$. More precisely,
if $\xi \mapsto u^{b,0}(\xi,\gamma'')-u^{u,0}(\xi,z=0)$ tends to zero
when $\xi$ tends to $+\infty$ at least as $\xi^{-1}$. Then the term
(\ref{KdV.1}) is at least $O(\varepsilon^2)$.

One may add one more argument if $\tau$ is still not too small. The
upper bound of the inner integral contains a $\sqrt{\varepsilon}$. So
mainly large $\gamma''$ (larger than $1/\sqrt{\varepsilon}$) will be
relevant. But for those $\gamma''$, the
$u^{b,0}(\xi+\tau/\varepsilon,\gamma'')-u^{u,0}(\xi+\tau/\varepsilon,z=0)$
is small uniformly in $\xi+\tau / \varepsilon$ because of the matching
condition on $u^0$ between the boundary layer and the upper part. So
(\ref{KdV.1}) is small for two different reasons and we can state the
viscous KdV equation with surface tension in the following
Proposition.
\begin{proposition}
\label{prop_KdV}
If the initial flow is localized, the KdV change of variables applied
to the 1D version of the system (\ref{Boussinesq_1}) leads to
\begin{equation}
\label{kdv}
2\tilde{\eta}_{\tau}+3 \tilde{\eta}
\tilde{\eta}_{\xi}+\left( \frac{1}{3}-\frac{1}{\varepsilon \Bo} \right)\tilde{\eta}_{\xi \xi \xi}
-\frac{1}{\sqrt{\pi R}}\int_{\xi'=0}^{\tau/\varepsilon}\frac{\tilde{\eta}_{\xi}(\xi+\xi',\tau)}{\sqrt{\xi'}}{\rm d}\xi'=o(1),
\end{equation}
for not too small times $\tau$, where we set 
Re$=R\,  \varepsilon^{-5/2}$.
\end{proposition}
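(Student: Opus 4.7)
The plan is to follow the standard reductive-perturbation derivation of KdV from the 1D Boussinesq system. I first specialize (\ref{Boussinesq_1}) to $v \equiv 0$ and $\partial_y \equiv 0$; the resulting system has $u_t + \eta_x = 0 = \eta_t + u_x$ at zeroth order, so a right-going wave satisfies $u(x,z,t) = \eta(x,t) + \varepsilon A(x,z,t) + O(\varepsilon^2)$ for some unknown correction $A$. I then apply the change of variables $\xi = x - t,\ \tau = \varepsilon t$, so that $\partial_x = \partial_\xi$ and $\partial_t = -\partial_\xi + \varepsilon\partial_\tau$. The strategy is to derive one equation at order $\varepsilon$ from each of (\ref{Boussinesq_1})$_1$ and (\ref{Boussinesq_1})$_3$ and add them so that $A_\xi$ drops out.

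Substituting the ansatz into (\ref{Boussinesq_1})$_1$ the leading term $\eta_\xi$ cancels; using $\eta_{xtt} = \eta_{\xi\xi\xi} + O(\varepsilon)$ from the wave equation, and the regime assumption $1/\Bo = O(\varepsilon)$ which keeps the surface-tension contribution at the KdV scale, division by $\varepsilon$ gives
\[
\tilde\eta_\tau - A_\xi + \tilde\eta\tilde\eta_\xi - \tfrac{z^2-1}{2}\tilde\eta_{\xi\xi\xi} - \tfrac{1}{\varepsilon\Bo}\tilde\eta_{\xi\xi\xi} = O(\varepsilon).
\]
A parallel treatment of (\ref{Boussinesq_1})$_3$, now with $\eta_{xxt} = -\eta_{\xi\xi\xi} + O(\varepsilon)$, yields
\[
\tilde\eta_\tau + A_\xi + 2\tilde\eta\tilde\eta_\xi + \tfrac{z^2-1/3}{2}\tilde\eta_{\xi\xi\xi} - \tfrac{1}{\sqrt{\pi R}}\,\mathcal{C} + \tfrac{1}{\varepsilon}\,\mathcal{D} = o(1),
\]
where $\mathcal{C}$ and $\mathcal{D}$ denote the transformed convolution and double-integral terms. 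Adding the two equations eliminates $A_\xi$; the $z$-dependent coefficients combine as $\tfrac{z^2-1/3}{2} - \tfrac{z^2-1}{2} = \tfrac{1}{3}$, so the $z$-dependence disappears as it must; the nonlinear parts merge into $3\tilde\eta\tilde\eta_\xi$; and the dispersive coefficient becomes the announced $1/3 - 1/(\varepsilon\Bo)$.

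It remains to identify $\mathcal{C}$ and to argue $\mathcal{D} = o(\varepsilon)$. For $\mathcal{C}$, the substitution $\xi' = t - s$ in $\int_0^t \eta_x(x,s)/\sqrt{t-s}\,\mathrm{d}s$ combined with $x = \xi + t$ produces $\int_0^{\tau/\varepsilon} \tilde\eta_\xi(\xi+\xi', \tau - \varepsilon\xi')/\sqrt{\xi'}\,\mathrm{d}\xi'$; replacing $\tau - \varepsilon\xi'$ by $\tau$ is legitimate up to an $o(1)$ error for not-too-small $\tau$ provided $\tilde\eta$ has some spatial localization, which is exactly the ``localized initial flow'' and ``not too small $\tau$'' hypotheses. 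For $\mathcal{D}$, I reuse the two arguments developed just above the statement: after the change of variables the bracket $u^{b,0}(\xi+\tau/\varepsilon,\gamma'') - u^{u,0}(\xi+\tau/\varepsilon,z=0)$ decays (at least like $(\xi + \tau/\varepsilon)^{-1}$) by localization, giving an $O(\varepsilon^2)$ contribution, and the truncation $\gamma' < \sqrt{R\varepsilon/(4\tau)}\,\gamma''$ additionally forces the effective $\gamma''$ to be of order $\varepsilon^{-1/2}$, at which scale the matching condition between boundary layer and upper stream makes the bracket uniformly small. The main technical obstacle is rigorously bounding these two remainders; the rest of the derivation is a routine order-by-order identification.
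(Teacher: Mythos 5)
Your derivation is correct and follows the same route as the paper: the paper delegates the mechanical part (change of variables, $u=\eta+O(\varepsilon)$, adding the first and third equations so that the $z$-dependence and the first-order correction cancel) to the reference \cite{LeMeur_14} and to the analogous explicit computation in its KP section, and treats only the new surface-tension term and the double-integral term, the latter by exactly the two localization/matching arguments you invoke. Your order-by-order bookkeeping (the $\frac{z^2-1/3}{2}-\frac{z^2-1}{2}=\frac13$ cancellation, the $3\tilde\eta\tilde\eta_\xi$ nonlinearity, the $1/3-1/(\varepsilon\Bo)$ coefficient, and the $\tau-\varepsilon\xi'\mapsto\tau$ replacement in the convolution) matches the stated result.
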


We do not replace the upper bound of the convolution ($\tau /
\varepsilon$) by $+\infty$ for two reasons. On the one hand we do not
know how fast this integral converges when $\tau / \varepsilon$ tends
to $+\infty$. On the other hand the $\tau$ term reminds us that this
integral on $\xi'$ is a mixture of time and space.

\section{Viscous KP equation with surface tension}
\label{sec.4}

Below, we derive the viscous KP equation, then discuss the zero-mass
(in $\xi$) constraint.

\subsection{Derivation}
We recall that we set Re$=R \, \varepsilon^{-5/2}$,
and $1/ \Bo=O(\varepsilon)$ but $1/(\varepsilon
\Bo)\not = 1/3+o(1)$.

\noindent With these assumptions, the change of variables
\[
\xi=x-t, \; y=y, \; \tau= \varepsilon t,
\]
and the assumption that the initial flow is localized, the system
(\ref{Boussinesq_1}) writes (since $\varepsilon/\Bo=O(\varepsilon^2)$):
\begin{equation}
\label{KP.1}
\begin{split}
-u_{\xi}(\xi,y,z,\tau)+\eta_{\xi}+\varepsilon (u_{\tau}+u u_{\xi}- \eta_{\xi\xi\xi}(z^2-1)/2 -\Frac{\eta_{\xi\xi\xi}    }{\varepsilon \Bo}+\sqb vu_y) & =  O(\varepsilon^2),\\
-v_{\xi}+\sqb \eta_y +\varepsilon (v_{\tau}+u v_{\xi}-\sqb \eta_{y \xi \xi}(z^2-1)/2 -\Frac{\eta_{\xi \xi y}}{\sqb \Bo} +\sqb v v_y )& = O(\varepsilon^2),\\
-\eta_{\xi}+u_{\xi}+\varepsilon (\eta_{\tau}+\Frac{1}{2} \eta_{\xi \xi \xi} (z^2-1/3)+ (u \eta)_{\xi} +\sqb (v \eta )_y )+\sqb v_y & \\
-\Frac{\varepsilon}{\sqrt{\pi R}} \Int_{\xi'=0}^{\tau/ \varepsilon} \Frac{u_{\xi}(\xi +\xi',y,z,\tau)+ \sqb v_{y}(\xi +\xi',y,z,\tau)}{\sqrt{\xi'}} {\rm d} \xi' & =  O(\varepsilon^2).
\end{split}
\end{equation}
In the above system, we did not write the double integral term from
(\ref{KdV.1}). Indeed, it was justified in the previous section that
this term can be neglected. Moreover, one might prove that the
transverse velocity is such that $v(z)=v(z')+O(\varepsilon)$ in a
similar way to Lemma 11 of \cite{LeMeur_14} (roughly, one
differentiate (\ref{Boussinesq_1})$_2$ with respect to $z$, then
integrate with respect to time $t$). So as to go further, since, at
the first order $u_{\xi}=\eta_{\xi}$, one may justify that
$u_{\tau}=\eta_{\tau}+o(1)$ as in the inviscid case, and
$u=\eta+o(1)$. One may then add the first and third equation:
\begin{align}
\nonumber 2\varepsilon \eta_{\tau}+ 3\varepsilon \eta \eta_{\xi} +\left(\frac{\varepsilon}{3} -\frac{1}{\Bo} \right) \eta_{\xi \xi \xi}+\varepsilon^{3/2}v u_y +\varepsilon^{3/2}(v \eta )_y+\sqb v_{y}& \\
\label{eq19.5}
-\Frac{\varepsilon}{\sqrt{\pi R}} \Int_{\xi'=0}^{\tau/ \varepsilon} \Frac{\eta_{\xi}(\xi +\xi',y,\tau)+\sqb v_y(\xi +\xi',y,z,\tau)}{\sqrt{\xi'}} {\rm d} \xi'& =O(\varepsilon^2).
\end{align}
In order to eliminate $v$, one must differentiate with respect to
$\xi$ to use the corresponding (\ref{KP.1})$_2$. Our choice of
non-dimensionalization forces us to manage extra terms, but we get the
classical KP equation with a viscous term:
\[
\begin{split}
\left( 2\eta_{\tau}+ 3\eta \eta_{\xi} +\left(\frac{1}{3} -\frac{1}{\varepsilon \Bo} \right) \eta_{\xi \xi \xi}\right)_{\xi}+\eta_{yy}-\Frac{1}{\sqrt{\pi R}} \Int_{\xi'=0}^{\tau/ \varepsilon} \Frac{\eta_{\xi \xi}(\xi +\xi',y,\tau)}{\sqrt{\xi'}} {\rm d} \xi'\\
=-\sqb \left(2v\eta_{\xi y}+v_y \eta_{\xi}+v_{\tau y} \right) + O(\varepsilon)+O(1/\Bo),
\end{split}
\]
where $v$ satisfies (\ref{KP.1})$_2$. This completes the proof of the
following Proposition.
\begin{proposition}
\label{prop_KP}
If the initial flow is localized, in the Boussinesq approximation and
if $1/\Bo =O(\varepsilon)$ but $1/\Bo \neq
\varepsilon/3+o(\varepsilon)$, the surface waves of a viscous fluid
propagating predominantly along $x$ and weakly along $y$ satisfy the
viscous KP equation:
\begin{equation}
\label{KP}
\left( 2\eta_{\tau}+ 3\eta \eta_{\xi} +\left(\frac{1}{3} -\frac{1}{\varepsilon \Bo} \right) \eta_{\xi \xi \xi}\right)_{\xi}+\eta_{yy}-\Frac{1}{\sqrt{\pi R}} \Int_{\xi'=0}^{\tau/ \varepsilon} \Frac{\eta_{\xi \xi}(\xi +\xi',y,\tau)}{\sqrt{\xi'}} {\rm d} \xi'= O(\sqrt{\varepsilon}),
\end{equation}
or the equivalent mixed system
\begin{equation}
\label{KP_system}
\begin{split}
2\eta_{\tau}+ 3\eta \eta_{\xi} +\left(\frac{1}{3} -\frac{1}{\varepsilon \Bo} \right) \eta_{\xi \xi \xi}-\Frac{1}{\sqrt{\pi R}} \Int_{\xi'=0}^{\tau/ \varepsilon} \Frac{\eta_{\xi}(\xi +\xi',y,\tau)}{\sqrt{\xi'}} {\rm d} \xi'+v_{y} & =O(\sqb),\\
-v_{\xi} +\eta_y & = O(\sqrt{\varepsilon}).
\end{split}
\end{equation}
\end{proposition}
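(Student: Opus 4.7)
The plan is to start from the weakly transverse viscous Boussinesq system~(\ref{Boussinesq_1}) of Proposition~\ref{Prop.1} and reduce it to a single PDE by combining the KdV-type change of variables with the weak transverse coupling, exactly the way inviscid KP is derived from Boussinesq but keeping track of the memory kernel and the surface-tension correction.

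First I would apply the change of variables $\xi = x-t$, $y=y$, $\tau=\varepsilon t$ to the three components of (\ref{Boussinesq_1}), using $\partial_x=\partial_\xi$, $\partial_t=-\partial_\xi+\varepsilon\partial_\tau$. Under the regime $\RE=R\varepsilon^{-5/2}$ and $1/\Bo=O(\varepsilon)$, the dispersive surface-tension correction $\eta_{xxx}/\Bo$ becomes an $\varepsilon\eta_{\xi\xi\xi}/(\varepsilon\Bo)$ term that survives at the same order as the $(z^2-1)/2$ dispersion. The prefactor $-\varepsilon/\sqrt{\pi R}$ in front of the convolution combines with the change of variables $t=\tau/\varepsilon$ so the upper limit of integration becomes $\tau/\varepsilon$, as in the KdV case. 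For the double integral on the initial boundary-layer mismatch, I would reuse the two arguments already made for Proposition~\ref{prop_KdV}: localization of the initial data (the argument of $u^{b,0}-u^{u,0}$ slides to infinity along $\xi+\tau/\varepsilon$) combined with the matching condition between the upper flow and the boundary layer, which together give an $O(\varepsilon^2)$ contribution. The outcome is system~(\ref{KP.1}).

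Next I would establish two auxiliary reductions needed to close the system. The first is that the transverse velocity $v$ is $z$-independent up to $O(\varepsilon)$: differentiate the $v$-equation of (\ref{KP.1}) with respect to $z$ and integrate in time, mimicking Lemma~11 of \cite{LeMeur_14}, which gives $v(\xi,y,z,\tau)=v(\xi,y,z',\tau)+O(\varepsilon)$. The second is that $u$ coincides with $\eta$ at leading order: the zeroth-order part of (\ref{KP.1})$_1$ and (\ref{KP.1})$_3$ yields $u_\xi=\eta_\xi$, hence $u=\eta+o(1)$, and a time-differentiation argument also gives $u_\tau=\eta_\tau+o(1)$. Once these are in place, adding (\ref{KP.1})$_1$ and (\ref{KP.1})$_3$ and inserting $u\simeq\eta$ removes $u$ entirely and produces (\ref{eq19.5}), a single equation for $\eta$ and $v$ carrying the KdV nonlinearity, the combined dispersion $(1/3-1/(\varepsilon\Bo))\eta_{\xi\xi\xi}$, the memory kernel, and a transverse coupling $\sqrt{\varepsilon}\,v_y$.

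Finally I would differentiate (\ref{eq19.5}) in $\xi$ and use (\ref{KP.1})$_2$ (i.e.\ $-v_\xi+\sqrt{\varepsilon}\eta_y=O(\varepsilon)$) to trade $\sqrt{\varepsilon}\,v_{y\xi}$ for $\varepsilon\,\eta_{yy}$, which produces the canonical KP transverse term; the quadratic cross-terms $\sqrt{\varepsilon}(2v\eta_{\xi y}+v_y\eta_\xi+v_{\tau y})$ remain as an $O(\sqrt{\varepsilon})$ residual, giving the stated error size. The equivalent mixed form (\ref{KP_system}) is just (\ref{eq19.5}) together with (\ref{KP.1})$_2$ before the $\partial_\xi$ elimination of $v$. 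The main obstacle, in my view, is bookkeeping rather than analysis: because the Lannes-Saut scaling enforces $v=O(1)$ rather than $O(\sqrt{\varepsilon})$ (see Remark~\ref{rem_Lannes_Johnson}), the transverse nonlinear terms contribute at order $\sqrt{\varepsilon}$, which caps the precision of (\ref{KP}) at $O(\sqrt{\varepsilon})$ rather than the $O(\varepsilon)$ one might naively expect by analogy with KdV; carefully checking that no such $\sqrt{\varepsilon}$ term has been secretly absorbed into the leading KP operator, and verifying that the surface-tension contribution remains at the dispersive scale under the assumption $1/(\varepsilon\Bo)\neq 1/3+o(1)$, is where I would spend the most effort.
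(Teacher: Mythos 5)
Your proposal follows essentially the same route as the paper: the change of variables $\xi=x-t$, $\tau=\varepsilon t$ applied to (\ref{Boussinesq_1}) to get (\ref{KP.1}), neglect of the double integral by the localization and matching arguments of the KdV section, the $z$-independence of $v$ via the Lemma~11 argument, the identification $u=\eta+o(1)$ allowing the sum of (\ref{KP.1})$_1$ and (\ref{KP.1})$_3$ to give (\ref{eq19.5}), and finally the $\partial_\xi$ differentiation combined with (\ref{KP.1})$_2$ to convert $\sqb\, v_{y\xi}$ into $\varepsilon\eta_{yy}$, leaving the transverse cross-terms as the $O(\sqb)$ residual. This matches the paper's derivation step for step, including the explanation of why the Lannes--Saut scaling caps the accuracy at $O(\sqb)$.
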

%
%
One must keep in mind that our choice of scale triggers the
$\sqrt{\varepsilon}$ (even inside the convolution) in
(\ref{KP.1}). This choice is already discussed in Remark
\ref{rem_Lannes_Johnson}. Had we made the same choice of scale for $v$
as Johnson \cite{Johnson_97} and many others, we would have replaced
$v$ by $\sqrt{\varepsilon} v_J$ in (\ref{KP.1}). The equivalent
(\ref{eq19.5}) could be simplified by $\varepsilon$ and it would write
:
\[
2\eta_{\tau}+ 3\eta \eta_{\xi} +\left(\frac{1}{3} -\frac{1}{\varepsilon\Bo} \right) \eta_{\xi \xi \xi}+ v_{Jy} -\Frac{1}{\sqrt{\pi R}} \Int_{\xi'=0}^{\tau/ \varepsilon} \Frac{\eta_{\xi}(\xi +\xi',y,\tau)}{\sqrt{\xi'}} {\rm d} \xi' =O(\varepsilon),
\]
and $v_J$ would satisfy (\ref{KP.1})$_2$ modified:
\[
-v_{J\xi}+\eta_y +\varepsilon \left( v_{J\tau}+u v_{J\xi}- \eta_{y \xi \xi}(z^2-1)/2 -\Frac{\eta_{\xi \xi y}}{\varepsilon \Bo} \right) = O(\varepsilon^{3/2}).
\]
Oddly, this would lead to the same KP equation as (\ref{KP}) but up to
the order $O(\varepsilon)$ and not $O(\sqrt{\varepsilon})$. During the
proof, we exhibit the $O(\sqrt{\varepsilon})$ terms which would be
$O(\varepsilon)$ if $v=\sqrt{\varepsilon} v_J$ although they are not
kept in the Proposition. So there is no contradiction and the accuracy
of KP is tied to the property that $v=O(\sqrt{\varepsilon})$ or not.

\subsection{The zero-mass constraint}

As is very well discussed in \cite{MST_07}, the usual KP equation is
often written with an operator $\partial_{\xi}^{-1} \partial_y^2$. Yet
such an operator assumes the solution is differentiated (in $\xi$)
from a function that tends to $0$ when $\xi \rightarrow \pm
\infty$. It is proved in \cite{MST_07} that although it is not
obvious, this assumption is right for inviscid fluids. Their proof is
done in two steps.

In the first step, the linear KP equation is solved with inverse
Fourier transforms. For more general equations, but only dispersive,
like $u_t-Lu_{\xi}+\partial_{\xi}^{-1}\partial_{y\, y}u=0$ (the symbol
of $L$ is $\varepsilon \mid X \mid^{\alpha}$ for $\alpha >1/2$), the
fundamental solution writes
\[
G(t,\xi,y)=\mathcal{F}^{-1 }_{(X,y') \rightarrow (\xi,y)}\left[e^{it(\varepsilon X | X|^{\alpha}-y'^2/X)} \right],
\]
where $\varepsilon= \pm 1$ depends on the KPI or KPII equation (or on
the sign of $1/3-1/(\varepsilon \Bo)$). For the usual KPI and KPII,
$\alpha = 2$.

\noindent Thanks to the Lebesgue's Dominated convergence theorem and
inventive changes of variables, it is proved that the fundamental
solution is regular, and is differentiated from a more regular
function $A$ (which is $\mathcal{C}(\mathbb{R}^2) \bigcap
L^{\infty}(\mathbb{R}^2) \bigcap \mathcal{C}_{\xi}^1(\mathbb{R}^2)$)
such that $A \ast u_0 \rightarrow 0$ when $\mid \xi \mid \rightarrow
+\infty$.

In a second step, the nonlinearity is treated with the Duhamel
formulation. The solution is then the convolution of the group of the
linear equation with the nonlinear term $u \, u_{\xi}$ and all what is
already proved for the linear equation extends to the nonlinear one.

One may wonder whether such an interesting result can extend to the
viscous case. Indeed, the proof is even simpler and the details are
left to the interested reader. The Fourier transform of the
dissipative term writes
\[
\begin{array}{rl}
\frac{-1}{\sqrt{\pi R}}\mathcal{F}_{\xi \rightarrow X} \left( \Int_0^{+\infty} \Frac{\eta_{\xi}(\xi +\xi',y,\tau)}{\sqrt{\xi'}} \, {\rm d} \xi' \right)=& \frac{-1}{\sqrt{2\pi^2 R}} \Int_{\mathbb{R}_{\xi}} \Int_{\xi'=0}^{+\infty} \frac{\eta_{\xi }(\xi+\xi')}{\sqrt{\xi'}} e^{-i\xi X}\, {\rm d}\xi'{\rm d} \xi\\
 =& -\frac{i\, X}{\sqrt{\pi R}} \hat{\eta}(X)\Int_{\xi' =0}^{+\infty} \frac{e^{i \xi' X}}{\sqrt{\mid \xi' \mid}} {\rm d} \xi'.
\end{array}
\]
By a simple change of variable, one may see that the last integral is
indeed only a function of the sign of $X$ (denoted $\sgn (X)$) times
$1/\sqrt{\mid X \mid }$. This function of $\sgn (X)$ can be computed :
\[
\Int_{\xi=0}^{+\infty} \frac{e^{i \, \sgn(X) \xi}}{\sqrt{ \mid \xi \mid}}{\rm d} \xi=\sqrt{\frac{\pi}{2}}(1+i\sgn(X)),
\]
and the Fourier corresponding term reads:
\[
- i \, \hat{\eta}(X)\, \frac{X}{\sqrt{2R\mid X \mid}}(1+i\sgn(X)).
\]
As a consequence, one could study the fundamental solution
\[
G^{NS}(t,\xi,y)=\mathcal{F}^{-1}_{(X,y') \rightarrow (\xi,y)} \left[e^{it \left(\varepsilon X \mid X \mid^{\alpha}-y'^2/X+(1+i \sgn(X))X(2R| X |)^{-1/2}\right)}\right].
\]
The only real part (non-dispersive) inside the exponential is $it
\times i \sgn(X) X(2 R \mid X \mid)^{-1/2}=-t\mid X
\mid^{1/2}(2R)^{-1/2}$ which sign is compatible with the dissipation
and ensures convergence of all the integrals. So one do not even need
to use the Lebesgue dominated convergence theorem and the proof is
simpler than in \cite{MST_07}.

\section{Viscous Boussinesq equation with surface tension}
\label{sec.5}
The Boussinesq equation is a second order equation that takes into
account waves going both to the right and to the left. Since the
derivation is straightforward, we only sketch it, following Johnson
\cite{Johnson_97} (p. 216-219) and state the Proposition
\ref{prop_Boussinesq}. Johnson \cite{Johnson_97} uses a different
scaling, but only along the transverse direction which is not used
here.

Starting from the Boussinesq system (\ref{Boussinesq_1}) written only
in 1D (with $x,z,t$ and not $x,y,z,t$), one may differentiate
(\ref{Boussinesq_1})$_3$ with respect to $t$ and
(\ref{Boussinesq_1})$_1$ with respect to $x$. The difference of these
two equations writes after some easy computations:
\begin{multline}
\label{EB.2}
\eta_{tt}-\eta_{xx}-\varepsilon \left( u^2+\frac{\eta^2}{2} \right)_{xx}-\frac{\varepsilon}{3} \eta_{xxtt} + \frac{\eta_{xxxx}}{\Bo}+\frac{\varepsilon}{\sqrt{\pi R }} \eta_{tt} \ast \frac{1}{\sqrt{t}} \\
+\frac{\varepsilon }{\sqrt{\pi R}} {\rm p.v. }\Frac{1}{\sqrt{t}} \left[ u_x^{u,0}(x,z=0)-\Int_{\gamma'=0}^{+\infty} u^{b,0}_{x \gamma}(x,\gamma')\; e^{-\frac{R}{4t}\gamma'^2} {\rm d}\gamma'\right] = O(\varepsilon^2),
\end{multline}
where the convolution is in time and {\rm p.v.} denotes the principal
value as defined in the theory of distributions. The {\rm p.v.} stems
from an integration by parts in which one has a boundary integral
\[
\Frac{\varepsilon}{\sqrt{\pi R t}}\left[ f^0_{x}(x,\gamma')e^{-\frac{R \gamma'^2}{4t}} \right]_{\gamma'=0}^{+\infty}.
\]
The $u^{u,0}$ and $u^{b,0}$ are the initial values of $u^u$ and
$u^b$. They must be provided. Appart from the two viscous terms and
the term generated by surface tension, this equation is identical to
(3.41) of \cite{Johnson_97}. Since we concentrate on $x=O(1)$, we may
write $u=\int_{-\infty}^x u_x=-\int_{-\infty}^x \eta_t(x',t) {\rm
  d}x'+O(\varepsilon)$. The above equation is then the Eulerian form
of the Boussinesq equation rewritten in (\ref{Boussinesq_E}).\\

Let us now come back to the Lagrangian coordinates and change of
fields. In \cite{Johnson_97}, the author proposes $X =x+\varepsilon
\int_{-\infty}^x \eta(x',t) {\rm d}x'$, but we consider the following
change of variable easier to justify:
\begin{equation}
\label{EB.3}
\begin{array}{rl}
X& =x-\varepsilon \int_{0}^t u(x,t') {\rm d}t'\\
H(X,t)& =H(x-\varepsilon \int_{0}^t u(x,t') {\rm d}t',t) = \eta(x,t)-\varepsilon \eta^2(x,t).
\end{array}
\end{equation}
The function $H$ is assumed to be defined on $X$ in the initial
domain.  Using these definitions, it is straightforward to rewrite
(\ref{EB.2}) under the form of a Lagrangian Boussinesq equation with
surface tension and viscosity taken into account. Only the term
containing the initial horizontal velocities needs some insight. It is
the last one in (\ref{EB.2}) and both $u^{u,0}(x,z=0)$ and $u^{b,0}_{x
  \gamma}(x,\gamma')$ can be considered as depending on $x$ or on $X$
since an $\varepsilon$ appears in front of the whole term and the
difference between $x$ and $X$ is $O(\varepsilon)$. The following
Proposition states the two forms of the Boussinesq equation in 1D.
\begin{proposition}
\label{prop_Boussinesq}
Let a flow of a viscous fluid with surface tension in a 2D channel
with a flat bottom like the one depicted in Figure \ref{fig1}. Let
$u^{u,0}(x,z=0)$ denote the initial horizontal velocity in the upper
part (respectively $u^{b,0}(x,\gamma)$ in the boundary layer). We
assume the Boussinesq approximation and $1/\Bo =O(\varepsilon)$. Under
these assumptions, the surface waves obey the viscous Boussinesq
equation in the Eulerian form:
\begin{multline}
\label{Boussinesq_E}
\eta_{tt}-\eta_{xx} -\varepsilon \left( u^2+\eta^2/2 \right)_{xx}-\left(\frac{\varepsilon}{3} -\frac{1}{\Bo} \right) \eta_{xxxx}+\Frac{\varepsilon}{\sqrt{\pi R}} \eta_{tt} \ast \Frac{1}{\sqrt{t}}\\
+\Frac{\varepsilon}{ \sqrt{\pi R }}{\rm p.v. }\Frac{1}{\sqrt{t}} \left( -(\eta_t)_{t=0}-\Int_{\gamma'=0}^{+\infty}  u^{b,0}_{x \gamma}(x,\gamma')\; e^{-\frac{R\gamma'^2}{4t}} {\rm d} \gamma'\right) =O(\varepsilon^2),
\end{multline}
where the convolution is in time and {\rm p.v.} denotes the principal
value as defined in the theory of distributions. The Lagrangian form
(with (\ref{EB.3})) writes:
\begin{multline}
\label{Boussinesq_L}
H_{tt}(X,t)-H_{XX}-\Frac{3}{2}\varepsilon(H^2)_{XX}-\left(\frac{\varepsilon}{3}-\frac{1}{\Bo} \right) H_{XXXX}+\Frac{\varepsilon}{\sqrt{\pi R}} H_{tt} \ast \Frac{1}{\sqrt{t}}\\
+\Frac{\varepsilon }{\sqrt{\pi R}} {\rm p.v. }\Frac{1}{\sqrt{t}} \left[ -(H_t(X,t))_{t=0}-\Int_{\gamma'=0}^{+\infty} u^{b,0}_{x \gamma}(x,\gamma')\; e^{-\frac{R}{4t}\gamma'^2} {\rm d}\gamma'\right] =O(\varepsilon^2),
\end{multline}
where the initial velocities defined on $x$ can be evaluated either on
$x$ or on $X$.
\end{proposition}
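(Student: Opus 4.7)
The plan is to mimic Johnson's derivation (see \cite{Johnson_97}, p.~216--219) of the inviscid Boussinesq equation, carrying along the viscous and capillary corrections already tracked in Section~\ref{sec.2}. First I specialise (\ref{Boussinesq_1}) to one spatial dimension by dropping $y$-derivatives and setting $v\equiv 0$; the system reduces to one momentum equation for $u$ and one mass equation for $\eta$. Differentiating the mass equation with respect to $t$, the momentum equation with respect to $x$, and subtracting, the cross term $u_{xt}$ cancels, leaving the wave operator $\eta_{tt}-\eta_{xx}$ as the principal part, the capillary term $\eta_{xxxx}/\Bo$ from $\eta_{xxx}/\Bo$ differentiated in $x$, and an assortment of $O(\varepsilon)$ remainders still involving $u$.

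To close at $O(\varepsilon)$ I use the zeroth-order wave-regime identities $u_x=-\eta_t+O(\varepsilon)$ and $u_t=-\eta_x+O(\varepsilon)$. The nonlinear pieces $\varepsilon u u_x$ and $\varepsilon(u\eta)_x$ combine, after the supplementary $\partial_x$, into $\varepsilon(u^2+\eta^2/2)_{xx}$; the two dispersive contributions $\varepsilon\eta_{xtt}(z^2-1)/2$ and $\varepsilon\eta_{xxt}(z^2-1/3)/2$ combine (swapping $\partial_x^2\leftrightarrow\partial_t^2$ through the wave equation) into $-\varepsilon\eta_{xxtt}/3$, with the $z$-dependence cancelling exactly as in the inviscid computation.

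The main technical obstacle is the time derivative of the Abel convolution $\varepsilon\,u_x\ast t^{-1/2}/\sqrt{\pi R}$ and of the boundary-layer double integral. For the convolution I invoke the distributional identity $\partial_t(f\ast t_+^{-1/2})=f_t\ast t_+^{-1/2}+f(0)\,\mathrm{p.v.}\,t^{-1/2}$, which together with $u_{xt}=-\eta_{tt}+O(\varepsilon)$ yields the viscous term $\varepsilon\eta_{tt}\ast t^{-1/2}/\sqrt{\pi R}$ plus a boundary contribution proportional to $u_x(t{=}0)\,\mathrm{p.v.}\,t^{-1/2}$. For the double integral, $\partial_t$ acts only through the inner upper limit $\sqrt{R/(4t)}\gamma''$, producing a factor proportional to $\gamma''e^{-R\gamma''^2/(4t)}=-(2t/R)\partial_{\gamma''}e^{-R\gamma''^2/(4t)}$; an integration by parts in $\gamma''$ then delivers the piece $-\int_0^{+\infty}u^{b,0}_{x\gamma}(x,\gamma')e^{-R\gamma'^2/(4t)}d\gamma'$ of the p.v. bracket, while the boundary value at $\gamma''=0$ produces $+u^{u,0}_x(x,z{=}0)$ via the no-slip identity $u^{b,0}(x,\gamma{=}0)=0$. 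Together these two boundary pieces yield (\ref{EB.2}); applying the initial mass equation $\eta_t(t{=}0)=-u_x(t{=}0)+O(\varepsilon)$ then replaces $u^{u,0}_x$ by $-(\eta_t)_{t=0}$, and using $u=-\int_{-\infty}^x\eta_t\,dx'+O(\varepsilon)$ in the quadratic $u^2$ completes the derivation of the Eulerian form (\ref{Boussinesq_E}).

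Finally, the Lagrangian form (\ref{Boussinesq_L}) follows from the near-identity change (\ref{EB.3}): its Jacobian is $\partial X/\partial x=1+O(\varepsilon)$, so spatial derivatives transform trivially modulo $O(\varepsilon)$, and the time derivative picks up an $-\varepsilon u\,\partial_X$ correction that, combined with the field redefinition $\eta=H+\varepsilon H^2+O(\varepsilon^2)$, absorbs the Eulerian quadratic $\varepsilon(u^2+\eta^2/2)_{xx}$ into $\tfrac{3}{2}\varepsilon(H^2)_{XX}$---this is Johnson's classical reduction and is the only step where $u$ is actually eliminated. All viscous and initial-data terms already carry an explicit prefactor $\varepsilon$ and satisfy $X-x=O(\varepsilon)$, so they are unchanged modulo $O(\varepsilon^2)$ and may be evaluated at either $x$ or $X$, which completes the proof of the proposition.
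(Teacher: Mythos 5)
Your route is the one the paper follows: reduce (\ref{Boussinesq_1}) to one dimension, take $\partial_t$ of the mass equation minus $\partial_x$ of the momentum equation, close at order $\varepsilon$ with $u_x=-\eta_t+O(\varepsilon)$ and $u_t=-\eta_x+O(\varepsilon)$, and then pass to the near-identity variables (\ref{EB.3}). The treatment of the nonlinearity, of the $z$-dependent dispersive terms (which cancel to $-\varepsilon\eta_{xxtt}/3$ directly, with no need to swap $\partial_x^2$ and $\partial_t^2$, since one term receives the extra $\partial_x$ and the other the extra $\partial_t$), of the capillary term and of the Lagrangian step all match the paper's admittedly terse argument.

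The one point you must fix is the bookkeeping of the two singular initial-data contributions. Differentiating $-\frac{\varepsilon}{\sqrt{\pi R}}\,u_x\ast t^{-1/2}$ in time gives $-\frac{\varepsilon}{\sqrt{\pi R}}\bigl(u_{xt}\ast t^{-1/2}+u_x(\cdot,0)\,t^{-1/2}\bigr)$, i.e.\ a boundary piece $-\frac{\varepsilon}{\sqrt{\pi R}}\,u_x^{u,0}\,t^{-1/2}$; the integration by parts in $\partial_t$ of the double integral gives, via no-slip ($u^{b,0}(x,\gamma=0)=0$), a boundary piece $+\frac{\varepsilon}{\sqrt{\pi R\, t}}\,u_x^{u,0}(x,z=0)$. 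With consistent signs these two contributions cancel at leading order, so they cannot ``together yield'' the single term $+\frac{\varepsilon}{\sqrt{\pi R}}\,{\rm p.v.}\,t^{-1/2}\,u_x^{u,0}$ (equivalently $-(\eta_t)_{t=0}$) that appears in (\ref{EB.2}) and (\ref{Boussinesq_E}). The paper attributes that term to the double-integral integration by parts alone and records no boundary contribution from the convolution; your derivation, carried through literally, would instead delete the $-(\eta_t)_{t=0}$ term from (\ref{Boussinesq_E}) and leave only the $u^{b,0}_{x\gamma}$ integral inside the p.v.\ bracket. You need either to exhibit a sign error in one of your two boundary pieces, or to state explicitly which one is being discarded and why; as written, the step ``together these two boundary pieces yield (\ref{EB.2})'' does not go through.
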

Notice that the formulae (\ref{Boussinesq_E},\ref{Boussinesq_L}) also
have a formulation in terms of $u^{u,0}(x,z=0)-u^{b,0}(x,\gamma)$ which
is physically meaningful.

\section{Is the Boussinesq system consistent with Navier-Stokes ?}
\label{sec.6}

The weakly transverse Boussinesq system is stated in Proposition
\ref{Prop.1}. We intend to justify here that it is consistent with the
initial Navier-Stokes equations (\ref{Ga.5}).

What is consistency ? In \cite{LeMeur_14}, the author justifies, but
does not prove in any functionnal space, that if there is a solution
to Navier-Stokes equations, with a free boundary and a flat bottom,
then, under the Bousinesq approximation, the solution satisfies the
viscous isotropic Boussinesq system. Such a necessary result deserves
to be completed by a sufficient one which is consistency. This is the
goal of the present section.

Notice that the consistency as defined by Lannes in \cite{Lannes_2013}
(Definition 5.1) is much more rigorous. While we use only expansions,
Lannes asked whether a solution to the asymptotic model (assuming it
exists), in the asymptotic regime, satisfied the initial complete
model on a convenient time-interval and in convenient norms, up to a
given power of the small parameter. Lannes proved in \cite{Lannes_02}
that KP solution (with his scaling) is consistent with the Boussinesq
system.

We prove below the following Proposition in the 1D case.

\begin{proposition}
\label{consistency}
Let $u^u(\mathbf{x},z,t)$ be the horizontal velocity in the upper part
of a flow in a geometry as in Figure \ref{fig1}, and
$\eta(\mathbf{x},t)$ be the height of the free boundary above the
fluid, for $\mathbf{x}=(x,y) \in \mathbb{R}^2, z \in
(\varepsilon,1+\varepsilon \eta(\mathbf{x},t))$ and nonnegative time $t$. Let
$(u^{u,0}(\mathbf{x},z),v^{b,0}(\mathbf{x},z))$ denote the initial
horizontal velocity in the upper part (respectively
$(u^{b,0}(\mathbf{x},\gamma),v^{b,0}(\mathbf{x},\gamma))$ in the
boundary layer with $\gamma=z/\varepsilon$).

If $(u^u,v^u,\eta)$ satisfies the non-isotropic Boussinesq system
(\ref{Boussinesq_1}), then there exist fields in the upper part
$(u^u,v^u,w^u,\eta)$ (resp. $(u^b,v^b,w^b,\eta)$) satisfying the
Navier-Stokes equations (\ref{Ga.5}) in the upper part (resp. the
boundary layer) of the domain and the interface continuity at least as
$O(\varepsilon)$.
\end{proposition}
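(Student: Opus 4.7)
The strategy I would take is to reverse-engineer the derivation: starting from a solution $(u^u,\eta)$ of the 1D Boussinesq system, I would construct compatible Navier-Stokes fields in the two subdomains (upper part and boundary layer) and verify that each of the five equations in (\ref{Ga.5}) is satisfied up to $O(\varepsilon)$. The proof essentially consists of checking that every equation used during the derivation can be ``run backwards'' with at most an $O(\varepsilon)$ residue.

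First, in the upper region I would take $u^u$ from the Boussinesq solution as the leading horizontal velocity, and reconstruct the remaining fields. The vertical velocity $w^u$ is obtained by integrating the incompressibility constraint $u_x + w_z/\varepsilon = 0$ from $z=\varepsilon$ upward, using the matching value $w^u(x,\varepsilon,t)$ inherited from the boundary layer; at leading order this reproduces the $z$-dependent structure $w^u = -\varepsilon(z-1)\,u_x^u +O(\varepsilon^2)$ used in the derivation. The pressure is then recovered from the vertical momentum equation (\ref{Ga.5})$_3$ as a hydrostatic profile plus an $O(\varepsilon)$ dynamic correction, and fixed by the normal stress boundary condition at $z=1+\varepsilon\eta$ involving $1/\Bo$. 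Plugging these expansions into the horizontal momentum equation and the kinematic condition produces exactly the first and third equations of (\ref{Boussinesq_1}) up to $O(\varepsilon^2)$, so the upper residues are controlled.

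Next, in the boundary layer I would introduce $\gamma=z/\varepsilon$ and seek $u^b(x,\gamma,t)$ satisfying a Prandtl-type heat equation $u^b_t - R^{-1} u^{b}_{\gamma\gamma} = -p_x$, with no-slip $u^b(x,0,t)=0$, matching condition $u^b(x,\gamma,t)\to u^u(x,0,t)$ as $\gamma\to+\infty$, and the prescribed initial datum $u^{b,0}$. This is a linear inhomogeneous heat problem on the half line that can be solved explicitly by Duhamel's formula using the reflected Gaussian kernel. The Duhamel representation naturally produces the half-derivative/convolution term $(u_x)\ast t^{-1/2}/\sqrt{\pi R}$ appearing in (\ref{Boussinesq_1})$_3$ when one integrates $w^b_z/\varepsilon$ in $\gamma$, and the double integral against $u^{b,0}-u^{u,0}(\cdot,z{=}0)$ appears as the contribution of the initial datum propagated by the heat kernel. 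The vertical velocity $w^b$ is then defined by integrating incompressibility from $\gamma=0$, which is possible because $u^b$ vanishes at the wall.

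Finally, I would verify the interface compatibility at $z=\varepsilon$ (equivalently $\gamma=1$): by construction both fields agree there at leading order, and a Taylor expansion in $\varepsilon$ shows $|u^u(x,\varepsilon,t)-u^b(x,1,t)|=O(\varepsilon)$ and similarly for $w$, giving the claimed $O(\varepsilon)$ continuity. I expect the genuine difficulty to lie precisely here, in proving that the boundary-layer expansion matches the outer expansion at the interface to the correct order for \emph{all} derivatives appearing in the Navier-Stokes stress tensor; the convolution-in-time and the $\gamma''$-integral against the initial data must recombine with the reconstructed $w^b$ exactly as in (\ref{Boussinesq_1})$_3$, without spurious $O(1)$ discrepancies. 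Once this matching is established, the remaining residues in (\ref{Ga.5}) are of size $O(\varepsilon)$ and the proposition follows. The 2D case is essentially identical, with $v^b$ solved by the analogous heat problem driven by the transverse pressure gradient $p_y/\sqrt{\varepsilon}$.
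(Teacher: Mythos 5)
Your overall strategy is the one the paper follows: keep $u^u$ from the Boussinesq solution, reconstruct $w^u$ from incompressibility and $p^u$ from the vertical momentum balance and the normal-stress condition, solve an explicit heat problem for $u^b$ in the boundary layer with the reflected Gaussian kernel, and check the equations of (\ref{Ga.5}) in each subdomain. Two of your steps, however, would fail as written. First, the matching location: you propose to check continuity at $z=\varepsilon$, i.e.\ $\gamma=1$. At $\gamma=1$ and $t=O(1)$ the boundary-layer profile has \emph{not} relaxed to its far-field value $u^u(x,0,t)$ --- the no-slip lift and the initial-data contribution in $u^b$ are both $O(1)$ there (the diffusion length in $\gamma$ is $O(1)$) --- so $|u^u(x,\varepsilon,t)-u^b(x,1,t)|=O(1)$, not $O(\varepsilon)$. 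The matching must be performed in the overlap region at an intermediate scale, e.g.\ $z=\sqrt{\varepsilon}$ (so $\gamma=\varepsilon^{-1/2}\to\infty$), where the boundary-layer corrections are exponentially small and $u^u(x,\sqrt{\varepsilon},t)=u^u(x,0,t)+O(\sqrt{\varepsilon})$; this is exactly how the paper closes the argument. Second, you anchor $w^u$ at $z=\varepsilon$ using data from the boundary layer, which leaves the kinematic free-surface condition (\ref{Ga.5})$_6$ to be verified separately (and that verification is where the convolution and double-integral terms of (\ref{Boussinesq_1})$_3$ would have to be re-derived). The paper instead integrates incompressibility \emph{downward} from $z=1+\varepsilon\eta$ and sets $w^u(z=1+\varepsilon\eta)=\varepsilon(\eta_t+\varepsilon\eta_x u^u)$, so that (\ref{Ga.5})$_4$ holds exactly and (\ref{Ga.5})$_6$ by construction.

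Two further points you omit. To control the $z$-dependence of $u^u$ (needed both for $p^u$ and for the term $\varepsilon\eta_{xtt}(z^2-1)/2$ in the horizontal momentum equation), the paper first proves an auxiliary lemma, obtained by differentiating (\ref{Boussinesq_1})$_1$ in $z$ and integrating in time for a localized wave, giving $u_z=\varepsilon\eta_{xt}z+O(\varepsilon^2)$ and hence the relations between $u(x,z,t)$, $u(x,0,t)$, $u(x,1,t)$ and $\int_0^1 u$. Finally, your expectation that ``the remaining residues are $O(\varepsilon)$'' hides the actual obstruction: in the boundary layer the horizontal momentum equation closes only to $O(\varepsilon)$ because the nonlinear terms $-\varepsilon u^u u^u_x+\varepsilon u^b u^b_x-\varepsilon u^b_\gamma\int_0^{\gamma}u^b_x+\varepsilon\eta_{xxx}/2$ do not cancel; improving this would require correctors and a nonlinear heat problem. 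This is precisely why the proposition only claims consistency ``at least as $O(\varepsilon)$'', and your proof should identify this limitation rather than assert uniform $O(\varepsilon^2)$ accuracy in the upper part and leave the boundary layer unquantified.
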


\begin{proof}
We start from the final Boussinesq model (\ref{Boussinesq_1}) written in 1D:
\begin{equation}
\label{eq83_84}
\begin{array}{rl}
u_t^u+\eta_x -\Frac{\eta_{xxx}}{\Bo}+\varepsilon u^u u_x^u-\varepsilon \eta_{xtt}\Frac{(z^2-1)}{2} & =  O(\varepsilon^2)+O(\varepsilon/\Bo),\\
\eta_t+u_x^u(x,z,t)-\Frac{\varepsilon}{2} \eta_{xxt} (z^2-\frac{1}{3})+ \varepsilon (u^u \eta)_x -\Frac{\varepsilon}{\sqrt{\pi R}} u_x^u \ast \Frac{1}{\sqrt{t}} &  \\
\hspace*{7mm}+\Frac{2\varepsilon}{\sqrt{\pi}}\Int_{\gamma''=0}^{+\infty} \left( u^{b,0}_x(x,\gamma'') -u^{u,0}_x(x,z=0) \right) \Int_{\gamma'=0}^{\sqrt{\frac{\Rb}{4t}}\gamma''} e^{ -\gamma'^2}{\rm d}\gamma'{\rm d}\gamma'' &=  o(\varepsilon),
\end{array}
\end{equation}
where $u^u$ is the horizontal velocity in the upper part ($z \in
(\varepsilon,1+\varepsilon \eta(x,t))$), $\eta(x,t)$ is  the free boundary's
height, $u^{u,0}(x,z)$ (resp. $u^{b,0}(x,\gamma)$) is the initial
horizontal velocity in the upper part (resp. in the boundary
layer). In the boundary layer, we set $z=\varepsilon \gamma$.

First, one must justify that $\eta_{tt}=\eta_{xx}+O(\varepsilon)$,
which is easy from the zeroth order of (\ref{eq83_84})$_1$. Then, the equation (\ref{eq83_84})$_1$ enables to prove
the following Lemma.
\begin{lemma}
A localized solution of (\ref{eq83_84})$_1$ is such that
\[
\begin{array}{rl}
 \int_0^1 u = & u(x,z,t)-\varepsilon \eta_{xt}(z^2-1/3)/2+O(\varepsilon^2),\\
u(x,0,t) = & u(x,z,t) - \varepsilon \eta_{xt} z^2 /2+O(\varepsilon^2),\\
u(x,1,t) = & u(x,z,t)+ \varepsilon \eta_{xt} (1-z^2)/2+O(\varepsilon^2).
\end{array}
\]
\end{lemma}
The way to prove this Lemma is identical to the proof of Lemma 11 of
\cite{LeMeur_14}. One differentiates (\ref{eq83_84})$_1$ with respect
to $z$, integrate with respect to time $t$, prove that the constants
of integration vanish if the wave is localized, and use that
$\eta_{tt}=\eta_{xx}+O(\varepsilon)$ to get $u_z=\varepsilon
\eta_{xt}z +O(\varepsilon^2)$. Completing the proof is then easy.

Thanks to the previous Lemma, if we define
\[
\begin{array}{rl}
p^u(x,z,t)=&(\eta-\eta_{xx}/\Bo)-\varepsilon \eta_{tt}(z-1)+\varepsilon \int_1^z \int_1^{z'} u^u_{xt}(x,z'',t){\rm d}z''\, {\rm d}z'\\
w^u(x,z,t)=&\varepsilon (\eta_t+\varepsilon \eta_x u^u(z=1+\varepsilon \eta))-\varepsilon\int_{1+\varepsilon \eta}^z u_x^u,
\end{array}
\]
then $(u^u,w^u,p^u)$ satisfies (\ref{Ga.5})$_{1,3,5,6}$ up to
$O(\varepsilon^2)+O(\varepsilon/ \Bo)$, and (\ref{Ga.5})$_{4}$ exactly
in the upper part.

We must now check (\ref{Ga.5}) in the boundary layer, where we define
$z=\varepsilon \gamma$ and
\[
\begin{array}{rl}
u^b(x,\gamma,t)= & u^u(x,z=0,t)+\frac{\srb}{2}\int_{0}^{+\infty} f_{0}(x,\gamma') \frac{e^{-\frac{R(\gamma'-\gamma)^2}{4t}}}{\sqrt{\pi t}} {\rm d} \gamma'\\
  & -u^u(x,0,.) \ast \mathcal{L}^{-1}_{ p \rightarrow t}(e^{-\sqrt{R p} \gamma}) -\frac{\srb}{2}\int_{0}^{+\infty} f_{0}(x,\gamma') \frac{e^{\frac{-R(\gamma'+\gamma)^2}{4t}}}{\sqrt{\pi t}} {\rm d} \gamma',
\end{array}
\]
where $f_0(x,\gamma)=u^{b,0}(x,\gamma)-u^{u,0}(x,z=0)$. It is proved
in Lemma 6 of \cite{LeMeur_14} that this function satisfies
\[
\begin{array}{rrl}
& (u^b-u^u(z=0))_t-(u^b-u^u(z=0))_{\gamma \gamma}/R &=0,\\
\Rightarrow &u_t^b+\eta_x-u^b_{\gamma \gamma}/R -\eta_{xxx}/ \Bo &= O(\varepsilon)+O(\varepsilon / \Bo).
\end{array}
\]
This $u^b$ also satisfies the initial conditions and the boundary
conditions:
\[
\left\{ \begin{array}{rl}
u^b(x,\gamma,t=0)= & u^{b,0}(x,\gamma),\\
u^b(x,\gamma=0,t) = & 0,\\
u^b(x,\gamma \rightarrow +\infty,t)= &u^{u}(x,z=0,t) \mbox{ (continuity condition)}.
\end{array}\right.
\]

The function $u^b$ is defined as $u^b-u^u(z=0)$ being the solution to
the heat equation. Because it is only the zeroth order of the
conservation of momentum in the boundary layer (\ref{Ga.5})$_1$, and
also because the limit between the boundary layer and the upper part
is not made sufficiently precise through the lift ``function''
$u^u(x,z=0,t)$, one may not prove more.

If we define:
\[
\begin{array}{rl}
p^b(x,\gamma,t) &= \eta(x,t)-\eta_{xx}/ \Bo+\varepsilon \eta_{xx}/2\\
w^b(x,\gamma,t) &= -\varepsilon^2 \int_0^{\gamma} u_x^b(x,\gamma',t)\, {\rm d} \gamma',
\end{array}
\]
then (\ref{Ga.5})$_1$ in the boundary layer reduces to:
\[
\begin{array}{l}
(u^b-u^u)_t-(u^b-u^u)_{\gamma \gamma}/R-\varepsilon u^u u^u_x+\varepsilon u^b u^b_x-\varepsilon u^b_{\gamma} \int_0^{\gamma} u^b_x+\varepsilon \eta_{xxx}/2+O(\varepsilon^2)+O(\varepsilon / \Bo),\\
=-\varepsilon u^u u^u_x+\varepsilon u^b u^b_x-\varepsilon u^b_{\gamma} \int_0^{\gamma} u^b_x+\varepsilon \eta_{xxx}/2+O(\varepsilon^2)+O(\varepsilon / \Bo) =O(\varepsilon),
\end{array}
\]
where $u^u$ is evaluated at $z=0$. So $(u^b,w^b,p^b)$ satisfies
(\ref{Ga.5})$_{1}$ only up to $O(\varepsilon)$. So as to go further,
one should use correctors, solve a nonlinear heat equation, make a
more precise study at the interface between the two subdomains and use
functionnal spaces. Simple computations prove that $(u^b,w^b,p^b)$
satisfies (\ref{Ga.5})$_{3,4,7}$ up to
$O(\varepsilon^2)+O(\varepsilon/ \Bo)$.

The overlapping of the boundary layer and the upper part makes that
consistency cannot be justified more precisely than
$O(\varepsilon)$. Yet, since the interface condition (for instance at
$z=\sqrt{\varepsilon}$) between these two domains is satisfied, the
proof is complete, up to order $O(\varepsilon)$.

\end{proof}

\section{Conclusion}

We modeled the water waves flow by the Navier-Stokes equations in a 3D
geometry with a flat bottom and free surface. We assumed the order of
magnitude of the velocity in the transverse direction and deduced the
scaling in this coordinate and discussed its influence on the results
(Remark \ref{rem_Lannes_Johnson}). The linear theory gave a new
dispersion relation and a new phase velocity (Proposition
\ref{prop1}). Surface tension was discussed and seemed often more
relevant than viscosity, although these two parameters act very
differently. We stated the associated Boussinesq system. This
intermediate system enabled us to derive the KdV equation (Proposition
\ref{prop_KdV}) and the KP equation or system (Proposition
\ref{prop_KP}). We justified why the zero-mass constraint does not
raise any trouble for viscous KP, using \cite{MST_07}. Using the
previous computations, we derived the Boussinesq equation (Proposition
\ref{prop_Boussinesq}). Reciprocally, we proved that, should they
exist, the solutions to the non-isotropic Boussinesq system are
consistent with the Navier-Stokes equations (Proposition
\ref{consistency}). These systems and equations are derived for a
viscous fluid and take the surface tension into account.





\end{document}